\documentclass{article}

\clearpage{}\usepackage{algorithm2e}
\usepackage{bbm}
\usepackage{mathtools}
\usepackage{bm}
\usepackage{tikz}
\usepackage{amsthm}
\usepackage{amssymb}
\usepackage{pifont}
\usepackage[margin=1in]{geometry}

\newcommand{\eps}{\varepsilon}
\newcommand{\reals}{\mathbb{R}}

\newcommand{\E}{\textbf{E}}
\newcommand\Prob{\textbf{Pr}}

\newcommand{\xmark}{\ding{55}}
\newcommand{\A}{{A}}
\newcommand{\Wel}{\textsc{Wel}}

\newcommand\Dist{\mathcal{D}}
\newcommand\Alloc{\mathcal{F}}
\newcommand\Mech{\mathcal{M}}
\newcommand\Domain{\mathcal{X}}
\newcommand\Range{\mathcal{Y}}
\newcommand\Exp{\mathbb{E}}

\newtheorem{theorem}{Theorem}

\newtheorem{remark}{Remark}

\newtheorem{lemma}{Lemma}

\newtheorem{definition}{Definition}

\clearpage{}

\begin{document}

\title{The Complexity of Black-Box Mechanism Design with Priors}

\author {
Evangelia Gergatsouli \\
UW-Madison\\
\tt{gergatsouli@wisc.edu}
\and
Brendan Lucier\\
Microsoft Research \\
\tt{brlucier@microsoft.com}
\and
Christos Tzamos\\
UW-Madison \\
\tt{tzamos@wisc.edu}
}

\date{}

\maketitle

\begin{abstract}

We study black-box reductions from mechanism design to algorithm design for welfare maximization in settings of incomplete information.
Given oracle access to an algorithm for an underlying optimization problem, the goal is to simulate an incentive compatible mechanism.  The mechanism will be evaluated on its expected welfare, relative to the algorithm provided, and its complexity is measured by the time (and queries) needed to simulate the mechanism on any input.
While it is known that black-box reductions are not possible in many prior-free settings, settings with priors appear more promising: there are known reductions for Bayesian incentive compatible (BIC) mechanism design for general classes of welfare maximization problems.
This dichotomy begs the question: which mechanism design problems admit black-box reductions, and which do not?

Our main result is that black-box mechanism design is impossible under two of the simplest settings not captured by known positive results.
First, for the problem of allocating $n$ goods to a single buyer whose valuation is additive and independent across the goods, subject to a downward-closed constraint on feasible allocations, we show that there is no polytime (in $n$) BIC black-box reduction for expected welfare maximization.
Second, for the setting of multiple single-parameter agents---where polytime BIC reductions are known---we show that no polytime reductions exist when the incentive requirement is tightened to Max-In-Distributional-Range.
In each case, we show that achieving a sub-polynomial approximation to the expected welfare requires exponentially many queries, even when the set of feasible allocations is known to be downward-closed.

 \end{abstract}

\section{Introduction}

A central question in algorithmic mechanism design is determining whether a class of incentive compatible mechanism design problems is or is not more computationally difficult than its algorithmic counterpart.  One way to address this question is with a black-box reduction from mechanism design to algorithm design.  Given oracle access to some algorithmic solution to an allocation problem, is it possible to implement an incentive compatible mechanism with similar performance?  Such reductions, when they exist, show not only that mechanism design is no harder than algorithm design, but also that one can handle separately the algorithmic and economic aspects of a given problem.  

A black-box transformation is specified by a number of parameters.  There is an algorithmic optimization problem, where the goal is to maximize some objective subject to a feasibility constraint.  For example, one might want to maximize expected welfare in an allocation problem, where (a) the agents have valuations for goods drawn from some known distribution, (b) the goal is to assign goods in order to maximize aggregate value, and (c) the assignment is subject to some packing constraint(s).  On top of this there is a desired incentive property, such as Bayesian incentive compatibility.  A black-box transformation is given query access to an algorithm $\A$ for the algorithmic optimization problem.  On a given input, the transformation must simulate a mechanism $\Mech$ that satisfies the incentive property, while (approximately) preserving the objective value of $\A$.  Importantly, the transformation has access to the input distribution, but \emph{not the feasibility constraint}.  In this way, the transformation must rely on $\A$ to solve the underlying optimization, and cannot simply disregard $\A$ in favor of implementing an unrelated incentive compatible mechanism.  We measure the runtime of the reduction in terms of the maximum number of queries made to $\A$, plus any side computation, to simulate the mechanism on any single input.

Black-box reductions with priors have been most commonly studied for welfare maximization problems.  Hartline and Lucier~\cite{HL15} initiated this line of study by providing a black-box reduction from Bayesian incentive compatible mechanism design to algorithm design for welfare maximization problems with single-dimensional agent types.  Subsequently, Hartline, Kleinberg, and Malekian~\cite{HKM15} and Bei and Huang~\cite{BH11} provided black-box reductions for $\epsilon$-approximate Bayesian incentive compatibility for arbitrary (i.e., multi-dimensional) agent types, and this was recently improved to a reduction for exact BIC for multi-dimensional types~\cite{DHKN2017}.  The runtime of these reductions depends on a measure of the size of each agent's type space; we will discuss this in more detail later.  On the other hand, it is known that prior-free reductions for DSIC mechanism design cannot approximately preserve worst-case approximation guarantees, even for single-parameter settings~\cite{CIL12}.  Between these known possibility and impossibility results, a rich space of natural open questions still lies open.

\subsection{Our Results}

In this work we present two impossibility results for black-box reductions in settings of incomplete information.
Our main result is an impossibility theorem for multi-dimensional BIC welfare maximization: for the problem of selling multiple goods to additive buyers with independent values across items, any polynomial-time BIC black-box reduction must degrade the expected welfare by a polynomial factor.  This is true even for the special case of a single buyer, in which case BIC is equivalent to DSIC.  Along the way, we show a similar impossibility result for expected welfare maximization with single-parameter agents, when the incentive constraint is tightened from BIC to Max-In-Distributional-Range (MIDR), a strengthening of dominant strategy incentive compatibility~\cite{DD09}.\footnote{An allocation rule is MIDR if the distribution over outcomes returned on input $x$ is at least as good, for $x$, as the distribution returned on any other input.}  Importantly, the welfare target in this latter result is to approximate the expected welfare of the provided algorithm, over the given prior, rather than match its worst-case approximation factor (for which an impossibility result was previously known).

To illustrate the setting of our main result,
consider the following toy example.  A small cloud infrastructure platform serves a single large client, and we are tasked with building its pricing mechanism.
The cloud services are offered as Virtual Machines (VMs), which come in a variety of types (high-memory, CPU-intensive, etc.).  The customer can purchase multiple VMs of possibly multiple different types, and their valuation is additive over VMs, although their value for each different type of VM is private information and drawn independently from some distribution.  However, the VMs themselves are simply abstractions: a VM maps onto a collection of hardward resources, possibly in many different ways.  Whether the platform can feasibly serve a given multiset of VMs is driven by its internal infrastructure technology, and the precise details of this are opaque to us.  An engineering team has provided an algorithm that maps any given customer valuation to a proposed allocation, but this algorithm comes with no incentive guarantees.  Our job is to implement an incentive compatible mechanism for the user, using this algorithm as an oracle and matching its expected welfare.

More generally, we are interested in black-box reductions from IC mechanism design to algorithm design for allocating $n$ goods to a single agent\footnote{Our results in this setting are impossibility results, so they also apply to the case of multiple agents.} with additive valuation and independent values for each good, and subject to some unknown downward-closed feasibility condition on the set of allowable allocations, with the goal of maximizing social welfare.  Our notion of incentive compatibility is dominant strategy incentive compatibility, although we note that since there is only a single agent this corresponds with Bayesian incentive compatibility.  That is, the mechanism offers allocations and payments such that the agent maximizes their expected outcome by reporting their true preferences.

Our main result is that any polytime black-box reduction that is DSIC, either with high probability or in expectation over any randomization in the transformation, 
will necessarily degrade the expected welfare of some algorithms by a factor of $O(n^{1/4})$. Our lower bound makes use of the matching characterization of incentive compatibility due to Hartline, Kleinberg, and Malekian~\cite{HKM15}.  Roughly speaking, any IC transformation must guarantee a certain max-in-range-style property across all subsets of agent types.  However, since the space of types is high-dimensional, it is possible to hide non-monotonicities that cannot be detected by the transformation with a polynomial amount of exploration or sampling.  Since the transformation cannot distinguish scenarios with or without such non-monotonicities, it can achieve incentive compatibility only by aggressively reducing allocations nearly everywhere.  As a corollary, a polytime BIC reduction for the case of $n$ items and $m$ agents must likewise degrade expected welfare by a polynomial factor for some algorithms.

At first glance, this impossibility result may seem at odds with known multi-dimensional reductions for BIC welfare maximization discussed earlier~\cite{HKM15,DHKN2017}.  These known reductions have runtime that depends on the support size of each agent's typespace.  The key distinction in our setting is that the type space for an additive agent with independent values is inherently very large, being exponential in the number of goods (or, for the continuous case, having polynomial doubling number~\cite{HKM15,DHKN2017}).  But the types can be described succinctly, due to the assumed independence across items.  This manifests as an exponential query complexity in the known BIC reductions for multi-dimensional welfare.  Our result can therefore be interpreted as showing that this dependence on the support size is unavoidable, even for a single additive agent.

Along the way, we show an impossibility result for a related setting.  Consider a scenario with $n$ single-parameter agents, each with an independently drawn value for service, and the problem is to determine which subset of agents receive service subject to some downward-closed feasibility constraint.  In this setting, it is known that any DSIC black-box reduction must sometimes degrade the worst-case approximation factor of the allocation oracle by a polynomial factor~\cite{CIL12}.  But it is an open question whether a DSIC black-box reduction could (approximately) preserve the \emph{expected} welfare of a general algorithm with respect to the distribution over types.  We make progress toward this question by showing that any black-box reduction from Maximum-In-Distributional-Range (MIDR) mechanism design to algorithm design that runs in polynomial time must sometimes reduce the expected welfare by a factor of $\Omega(n^{1/4})$.  The MIDR condition, proposed by Dughmi and Dobzinski~\cite{DD09}, implies dominant strategy incentive compatibility; it is strictly stronger than DSIC, but covers a very broad range of techniques for constructing DSIC mechanisms.  As it turns out, the matching characterization of IC, used in our main result for a single multi-dimensional additive agent, is very related to the MIDR condition for $n$ single-parameter agents.  Our result leverages this relationship, and indeed our lower-bound constructions for the two settings are very similar.  
We take this result as suggestive that there is no black-box reduction from DSIC mechanism design to algorithm design for single-parameter agents and the objective of expected welfare.  We conjecture that this is true, but leave it as an open problem.

\subsection{Further Related Work}

As described above, BIC black-box reductions are known for a rich class of welfare maximization problems~\cite{HL15,HKM15,BH11,DHKN2017}.  In the prior-free setting, Babaioff, Lavi, and Pavlov also show that for a class of single-minded combinatorial auction problems, one can achieve DSIC in a black-box way by losing a factor that is logarithmic in the ratio between the largest and smallest possible agent values~\cite{BLP09}.  Dughmi and Roughgarden show a black-box reduction for FPTAS algorithms that also applies in a broad range of multi-dimensional welfare maximization problems~\cite{DR14}.  There is also a significant line of work studying general methods for converting certain types of algorithms into IC mechanisms~\cite{BKV05,LS05}.

For the goal of revenue maximization, Cai, Daskalakis and Weinberg provide a BIC black-box reduction for the case where buyers have valuations that are additive and independent across a collection of goods~\cite{CDW12,CDW13}.  Their approach reduces revenue maximization to welfare maximization by developing an appropriate notion of virtual valuations.  Like the multi-dimensional reductions for welfare described above~\cite{HKM15,BH11,DHKN2017}, the running time of their reduction is polynomial in the number of types of any agent.  Our constructions use of the same setting of additive valuations with independent values across goods.  They also show how to improve the runtime of their reduction to be polynomial in the number of agents and items under an item-symmetry condition~\cite{CDW12}.  Our results imply that such a condition is necessary for polynomial runtime even for welfare maximization.

The first impossibility result for black-box reductions in mechanism design was due to Chawla, Immorlica, and Lucier~\cite{CIL12}, who showed that no black-box reduction that guarantees dominant strategy incentive compatibility (DSIC) can approximately preserve the worst-case approximation factor of a given algorithmic oracle.  Relative to that result, we relax the performance evaluation from worst-case welfare approximation to expected welfare approximation, and strengthen the incentive compatibility constraint from DSIC to MIDR.  In addition to this result, Chawla, Immorlica and Lucier also showed that there is no black-box reduction for BIC mechanisms with the objective of minimizing the makespan for single-parameter agents~\cite{CIL12}.  They left as an open question whether there exists a black-box reduction for DSIC mechanisms with the objective of maximizing expected welfare for single-parameter agents.  We show that the answer is no for the stronger incentive property of MIDR.

We focus primarily on black-box reductions with priors.  For the setting without priors, Pass and Seth~\cite{PS14} build upon the impossibility result of Chawla, Immorlica, and Lucier to show that even if the transformation is given access to the underlying problem's feasibility constraint, black-box transformations are still impossible under standard cryptographic assumptions.  Suksompong~\cite{S18} considers the case of downward-closed single-parameter environments, and likewise shows limits on the power of black-box reductions.

Our negative result for BIC black-box reductions for an additive bidder applies in a setting with a downward-closed feasibility constraint on the allocations.  This is closely related to models of agent valuations that are additive subject to a downward-closed constraint, such as $k$-additive valuations or additivity subject to a matroid constraint.  These models have recently attracted interest in the literature on revenue maximization; for example, it is known that simple pricing methods can achieve a constant fraction of the optimal revenue in any such environment~\cite{RW18}.  Our impossibility result shows that even for the conceptually simpler goal of maximizing expected welfare, no general reduction is possible when the downward-closed constraint is not known to the transformation.

\section{Model}
\label{sec:model}

We begin with some preliminaries before presenting our impossibility results for MIDR reductions for single-parameter agents (Section~\ref{sec:midr}) and for BIC reductions for an additive buyer (Section~\ref{sec:bic}). We conclude with future research directions in Section~\ref{sec:conclusion}.

\subsection{Preliminaries}

A mechanism $\Mech = (\A,P)$, consists of (a) an allocation rule $\A: \Domain^n \rightarrow \Range$ that takes as input the preferences of $n$ agents, each with a type space $\Domain$, and outputs an outcome $y \in \Range$; and (b) a payment function $P: \Domain^n \rightarrow \reals^n$ that computes the payment for every agent.  An agent's type is represented as a valuation function $v_i: \Range \rightarrow \reals$ that assigns a non-negative real value to each outcome in $\Range$.  Given input vector $v \in \Domain^n$, 
each agent $i$ derives value $v_i(A(v))$ from the mechanism's outcome.  Agents are quasi-linear, so that the utility of agent $i$ on input $v$ is $v_i(A(v)) - P_i(v)$.  

The social welfare function $\Wel: \Range \times \Domain^n \rightarrow \reals$ is given by $\Wel(y,v) = \sum_i v_i(y)$.  For simplicity, we sometimes write $\Wel(\A(v),v)$ as $\Wel(\A,v)$, the welfare of allocation rule $\A$ on input $v$.  Given a distribution $\Dist$ over valuations, we write $\Wel(\A) \triangleq \Exp_{v \sim D} [ \Wel(\A(v),v) ]$ for the expected welfare of $\A$.

A feasibility constraint is a subset of possible allocations $\Alloc \subseteq \Range$.  We are interested in the optimization problem of maximizing (expected) welfare subject to a feasibility constraint.  

\subsection{Special Case: Resource Allocation}

All of our lower-bound constructions will use the following class of resource allocation problems.  There are $m$ types of goods and $n$ agents.  
We can think of $\Range$ as a subset of $\reals^{n \times m}$, so that $y \in \Range$ is a vector $(y_{ij})$ and $y_{ij}$ is the amount of good $j$ allocated to agent $i$.
Valuations are additive, with $x_{ij}$ denoting the value of agent $i$ per unit of good $j$.  In other words, $\Domain = \reals^m$, and $v_i(y) = \sum_j x_{ij} y_{ij}$.

Given $y,z \in \Alloc$, we will write $z \leq y$ to mean $z_{ij} \leq y_{ij}$ for all $i$ and $j$.  We say that a feasibility constraint $\Alloc$ is \emph{downward-closed} if, whenever $y \in \Alloc$, we must also have $z \in \Alloc$ for all $z \leq y$.  That is, if a given allocation is feasible, then any other allocation obtained by taking goods away from agents is also feasible.

\subsection{Incentive Constraints}

We call a mechanism \emph{dominant strategy incentive compatible} (DSIC) if for
every agent $i$, any $v \in \Domain^{n}$ and any $v'_i \in \Domain$, it holds
that $v_i(A(v)) - P_i(v) \ge v_i(A(v'_i,v_{-i})) - P_i(v'_i,v_{-i})$.  I.e., no
agent has an incentive to misreport his true preference, for any declaration of
the other agents.

The DSIC constraint requires that every agent prefers to declare his type for
any types reported by the other agents. A weaker notion of incentive
compatibility is Bayesian incentive compatibility, where every agent prefers to
declare his type in expectation over the other agent reports. In particular, if
agent types come according to a product distribution $D$ over $\Domain^n$, we
call a mechanism \emph{Bayesian incentive compatible} (BIC) if for every agent
$i$ and any $v_i,v'_i \in \Domain$, it holds that $\Exp_{v_{-i} \sim
\Dist_{-i}} [ v(A(v_i,v_{-i})) - P_i(v_i,v_{-i}) ] \ge \Exp_{v_{-i} \sim
\Dist_{-i}} [ v_i(A(v'_i,v_{-i})) - P_i(v'_i,v_{-i}) ]$, where $\Dist_{-i}$ is
the distribution of all agent types other than $i$.  Notably, in the special
case where there is only a single agent, the definitions of DSIC and BIC
coincide.

We will also consider the class of ``maximal-in-distributional-range''
(MIDR) mechanisms~\cite{DD09}.

\begin{definition}
A randomized algorithm $A$ is maximal-in-distributional-range (MIDR) if, for
all inputs $v,v' \in \Domain$, it holds that $\Exp_{y \sim \A(v)}[ \Wel(y,v) ] \ge
\Exp_{y \sim \A(v')}[ \Wel(y,v) ]$.
\end{definition}

In other words, an allocation rule is MIDR if, for every input profile $v$, the expected 
welfare obtained on input $v$ would not be increased by instead returning the (random) allocation 
on any other input.  The MIDR property is a strengthening of DSIC, so any MIDR mechanism is
DSIC when paired with appropriate payments~\cite{DD09}.

We will also be interested in DSIC mechanisms with a single agent.
Hartline, Kleinberg, and Malekian~\cite{HKM15} provide a characterization of all the 
allocation rules that can be converted to BIC mechanisms.  Since BIC and DSIC coincide for a single agent, this
characterization also applies to DSIC mechanisms with a single agent.  This is as follows:

for a subset of types $\Domain' \subseteq \Domain$, consider the following weighted
bipartite graph $G(\Domain')$ constructed as follows.  The vertices of $G(\Domain')$ are 
$(U,V)$, where $U = V = \Domain'$.  That is, each side of the bipartite graph is a copy of $\Domain'$.  
For $v,w \in \Domain'$, the weight of
edge $(v,w)$ is taken to be $\Exp_{z \sim A(w)}[ v \cdot z ]$.  That is, the
weight of $(v,w)$ is the value that type $v$ has for the (possibly randomized)
outcome generated for type $w$.

A matching $m$ of $G(\Domain')$ is subset of edges of $G(\Domain')$ such that each vertex is 
incident with at most one edge.  For a given vertex $v$ incident with an edge in the matching, 
we will write $m(v)$ for the node matched with $v$.

\begin{theorem}[\cite{HKM15}]\label{thm:match}
Mechanism $A$ is BIC if and only if, for every $\Domain' \subseteq \Domain$,
the maximum weighted matching in $G(\Domain')$ is the identity matching, $m(v)
= v$ for all $v \in \Domain'$.
\end{theorem}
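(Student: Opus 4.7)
The plan is to prove both directions of the equivalence via the standard connection between BIC and cyclic monotonicity. Observe that $A$ is BIC iff there exist payments $P : \Domain \to \reals$ with $v \cdot \Exp[A(v)] - P(v) \ge v \cdot \Exp[A(w)] - P(w)$ for all $v,w \in \Domain$ (writing $\Exp[A(v)]$ for the expected allocation $\Exp_{z \sim A(v)}[z]$). By standard LP duality, such $P$ exist iff the directed weighted graph on $\Domain$ with edge weights $c(v,w) := v \cdot \Exp[A(v)] - v \cdot \Exp[A(w)]$ has no cycle of negative total weight. Both directions will amount to translating this cyclic-monotonicity condition into the matching condition of the theorem.

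For the forward direction, I would fix any $\Domain' \subseteq \Domain$ and any matching of $G(\Domain')$. Since valuations are non-negative all edge weights of $G(\Domain')$ are non-negative, and since $G(\Domain')$ is complete bipartite, a maximum matching may always be taken to be perfect, i.e., a permutation $\sigma$ of $\Domain'$. Summing the BIC inequality for the pair $(v, \sigma(v))$ over $v \in \Domain'$ and using that $\sum_v P(\sigma(v)) = \sum_v P(v)$ yields
$$\sum_{v \in \Domain'} v \cdot \Exp[A(v)] \;\ge\; \sum_{v \in \Domain'} v \cdot \Exp[A(\sigma(v))],$$
so the identity matching dominates $\sigma$ in weight and is therefore a maximum.

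For the converse, assume the identity is maximum in every $G(\Domain')$ and construct $P$ via the cyclic-monotonicity criterion above. Given any simple directed cycle $v_1 \to v_2 \to \cdots \to v_k \to v_1$ in the constraint graph on $\Domain$, take $\Domain' = \{v_1, \ldots, v_k\}$: the cycle's total weight is $\sum_i v_i \cdot \Exp[A(v_i)] - \sum_i v_i \cdot \Exp[A(v_{i+1})]$, where the second sum is exactly the weight of the cyclic-shift matching on $G(\Domain')$ and the first is the weight of the identity. The hypothesis makes this non-negative, so no negative cycles exist and the desired payments can be read off as shortest-path distances from an arbitrary root.

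The main step requiring care is the bookkeeping that translates arbitrary matchings on every subset $\Domain'$ into simple cycles on $\Domain$. Since any permutation decomposes into disjoint cycles, checking the inequality on cyclic-shift permutations of every finite subset suffices, and the hypothesis (ranging over all $\Domain'$) covers exactly these; infinite-$\Domain$ subtleties in the LP-duality step are handled by the usual continuous extension of Rochet's theorem and do not alter the argument.
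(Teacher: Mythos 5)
The paper states Theorem~\ref{thm:match} as an imported result from~\cite{HKM15} and does not prove it, so there is no in-paper proof to compare against. Your argument is correct and is the standard route: BIC-implementability of $A$ is equivalent, via Rochet's cyclic-monotonicity theorem (equivalently, LP duality for the system of difference constraints $P(v) - P(w) \le v\cdot\Exp[A(v)] - v\cdot\Exp[A(w)]$), to the absence of negative cycles in the allocation graph. The forward direction telescopes the payment terms over any permutation of a subset $\Domain'$, and the converse reduces an arbitrary simple cycle to the cyclic-shift matching on $\Domain' = \{v_1,\dots,v_k\}$ and uses the decomposition of any permutation into disjoint cycles to get back the full matching statement. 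This is also the key mechanism underlying~\cite{HKM15}. One small reading note: the theorem's phrase ``the maximum weighted matching in $G(\Domain')$ is the identity matching'' should be understood as ``the identity matching attains the maximum weight,'' not as a uniqueness claim; your proof establishes exactly this, which is what the rest of the paper actually uses.
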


\subsection{Black-Box Transformations}
We consider the following general setup for black-box transformations.  We are
given sample access to a prior distribution $\Dist$ over $\Domain^n$, and the
designer's objective is to maximize expected welfare subject to a feasibility
constraint $\Alloc \subseteq \Range$.  The feasibility constraint $\Alloc$
is not known.
Instead, we are given black-box access to a possibly randomized algorithm $\A:
\Domain^n \rightarrow \Range$ that returns allocations in $\Alloc$.  We can
query $\A$ on any input and observe the resulting allocation.  The designer's
goal is to implement a mechanism $\Mech_{\A} = (\A',P)$ with the following
guarantees:
\begin{itemize}
  \item $\A'$ only returns outcomes in $\Alloc$.
  \item $\Mech_{\A}$ satisfies a desired incentive guarantee, e.g., BIC or DSIC.
  \item $\A'$ achieves comparable welfare to $\A$.   
  That is, $\Wel(A')$ should approximate $\Wel(A)$.
\end{itemize}

We will tend to write $\Mech$ for a black-box transformation, which is defined with respect to oracle access to an allocation algorithm.  We will write $\Mech_{\A}$ for the mechanism implemented by this transformation given oracle access to algorithm $\A$.

Let us briefly comment on the first requirement, feasibility.  Since $\Alloc$ is unknown, 
the requirement that range$(\A) \subseteq \Alloc$ can only be satisfied by using algorithm $\A$ to learn feasible outcomes. 
One way to guarantee feasibility 
is to only ever return outcomes that are output by $\A$ on some input.  For general feasibility
constraints, this is the \emph{only} way to guarantee feasibility.  However, in some cases,
the designer may know that $\Alloc$ satisfies some property (such as downward-closedness), in which
case it's possible to determine an outcome is feasible without observing it as an
output of $\A$.  For example, if $\A$ is observed to allocate to agents $1$ and $2$ simultaneously,
then for a downward-closed constraint one can infer that it is also feasible to allocate only to agent $1$.

\subsubsection{Query Complexity}

We emphasize that our interest is in \emph{computationally efficient} black-box transformations that guarantee
our desired incentive and objective constraints.  For each of the incentive constraints we consider, it is straightforward to
construct a transformation without any loss of welfare if there were no contraints
on runtime or query complexity.
For example, one can transform any algorithm into a MIDR mechanism by first
querying the algorithm on every possible input, then returning the welfare-maximizing
allocation from the resulting range of outcomes.
However, this transformation is impractical when the number of possible input profiles
is large.
We study whether more efficient transformations exist that only need to query $\A$
at a polynomial (in $n$ and $m$) number of points in order to simulate an IC mechanism
on any given input. 

To this end, we define the query complexity of a transformation $\Mech$ to be the maximum number of
queries to the original function $\A$ that need to be performed in order to
compute the value of the transformed function $\A'(x)$ at any given point $x$.

\begin{remark}
  While the definition of the problem allows the input algorithm $\A$ to be randomized,
  it is without loss of generality to assume that it is deterministic. Indeed, if we consider the
  deterministic algorithm $\tilde A$ obtained by sampling $\A$ on every input, the expected welfare of
  $\tilde A$ will be indentical to that of $\A$.
  As $\tilde A$ must return the same value at a point $x$ when queried multiple times,
  we can fix the random seed of the algorithm $\A$ in advance before performing the conversion.
\end{remark}

\section{A Lower Bound for MIDR Transformations}
\label{sec:midr}

In this section we restrict our attention to a welfare maximization problem
with $n$ single-dimensional agents, meaning that there is only a single type of
good (i.e., $m = 1$).  The optimization is subject to a downward-closed
feasibility constraint.  Indeed, it will actually suffice to consider binary
types and binary outcomes: the type space will be $\Domain = \{0,1\}^n$, and
the outcome space is $\Range = \{0,1\}^n$.  That is, each agent either has a
unit of value for being served, or no value for being served; and each agent
either receives a unit of service or none.  

We can think of an input profile as a subset of agents, corresponding to those with value $1$.  Similarly, an allocation is also associated with a subset of agents: those who receive a unit of service.  With this in mind, we will sometimes represent inputs and outcomes as sets for notational convenience.

We will show that even in this very restricted setting, any black-box reduction from single-dimensional MIDR mechanism design to algorithm design with polynomial query complexity must degrade the expected welfare by a polynomial factor.

\begin{theorem}
\label{thm:main.midr}
  There are constants $\epsilon > 0$ and $c > 0$ such that the following is true.
  For any MIDR black-box transformation $\Mech$ with query complexity $e^{n^\epsilon}$, 
  there exists an algorithm $A$ and distribution $\Dist$ 
  such that $\Wel(\Mech_A) \le \frac{ \Wel(A) }{ n^{c} }$.  This is true even if the feasibility constraint $\Alloc$ is known to be downward-closed.
\end{theorem}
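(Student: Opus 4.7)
The plan is to exhibit a hard instance $(\A,\Dist)$ via a planted random construction and then invoke Yao's principle to obtain a deterministic hard instance. Set $k = \lceil \sqrt{n} \rceil$, let the (unknown) feasibility constraint be $\Alloc = \{T \subseteq [n] : |T| \leq k\}$, let $\mathcal{H}$ be a uniformly random subfamily of $\binom{[n]}{k}$ with $|\mathcal{H}| = \binom{n}{k}/2$, and let $\Dist$ be the uniform distribution over indicator vectors of $k$-subsets of $[n]$. Define the algorithm $\A_\mathcal{H}$ by $\A_\mathcal{H}(\mathbf{1}_S) = S$ when $S \in \mathcal{H}$ and $\A_\mathcal{H}(\mathbf{1}_S) = \emptyset$ otherwise. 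Then $\Wel(\A_\mathcal{H}) = k \cdot \Prob_{S}[S \in \mathcal{H}] = k/2 = \Omega(\sqrt{n})$. Crucially, $\A_\mathcal{H}$ is deliberately \emph{not} MIDR: for $S \in \mathcal{H}$ and $S' \notin \mathcal{H}$ with $|S \cap S'| > 0$, we have $\mathbf{1}_{S'} \cdot \A_\mathcal{H}(\mathbf{1}_{S'}) = 0 < |S \cap S'| = \mathbf{1}_{S'} \cdot \A_\mathcal{H}(\mathbf{1}_S)$, so the transformation cannot simply return $\A_\mathcal{H}$'s outputs verbatim.

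For any MIDR transformation $\Mech$ (which, by Yao's principle, we may take to be deterministic), write $\A' = \Mech_{\A_\mathcal{H}}$ and $p_v = \Exp[\A'(v)] \in [0,1]^n$. MIDR forces $p_v$ to maximize $v \cdot p$ over the range $R = \{p_{v'} : v' \in \Domain\}$. Since $R$ lies in the convex hull of certified-feasible allocations and linear optimization is attained at a vertex, each $p_v$ may be taken to be the indicator of a certified-feasible set $T_v$. The key structural claim is that $|R| \leq Q$, where $Q$ is the query complexity. Downward-closedness alone cannot certify a size-$k$ set, so every size-$k$ element of $R$ must be directly observed as a query output; and each query to $\A_\mathcal{H}$ returns either $\emptyset$ or a single element of $\mathcal{H}$. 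Moreover, MIDR forces every input's evaluation to certify \emph{all} of $R$: if the certified pool on input $v_1$ missed some $S^\ast \in R$ output on input $v_2$, then $v_1 \cdot p_{v_1}$ (computed from an insufficient pool) could be less than $v_1 \cdot p_{v_2} = |v_1 \cap S^\ast|$, violating MIDR. Hence $|R| \leq Q$, and there exist $S_1,\ldots,S_Q \in \mathcal{H}$ containing all of $R$ up to downward closure.

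With the range thus bounded, $\Wel(\Mech_{\A_\mathcal{H}}) \leq \Exp_{S \sim \Dist}[\max_i |S \cap S_i|]$. For $S$ a uniformly random $k$-subset and each fixed $S_i$ of size $k$, $|S \cap S_i|$ is hypergeometric with mean $k^2/n = 1$, satisfying $\Prob[|S \cap S_i| \geq t] \leq (e/t)^t$. Union bound over $i$ yields $\Prob[\max_i |S \cap S_i| \geq t] \leq Q(e/t)^t = o(1)$ for $t = \Theta(\log Q / \log \log Q)$. Hence $\Exp_S[\max_i |S \cap S_i|] = O(\log Q / \log \log Q) = O(n^\epsilon / \log n)$ when $Q = e^{n^\epsilon}$. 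Comparing with $\Wel(\A_\mathcal{H}) = \Omega(\sqrt{n})$, the welfare ratio is $O(n^{\epsilon - 1/2} \log n)$, which is at most $1/n^c$ for any $c < 1/2 - \epsilon$. Choosing $\epsilon < 1/4$ gives the claimed gap of $n^{1/4}$, and averaging over $\mathcal{H}$ yields a specific $\mathcal{H}^\ast$ realizing the bound deterministically.

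The main obstacle is the rigorous execution of the MIDR-consistency step---bounding the certified pool by $Q$ \emph{globally} rather than $Q$ per input. The argument leans on the matching characterization of Theorem~\ref{thm:match}, which for MIDR specializes to the pairwise inequality $v_1 \cdot p_{v_1} \geq v_1 \cdot p_{v_2}$ for every pair of inputs. An adversary can pick $v_1$ in the support of $\Dist$ so that $|v_1 \cap S^\ast|$ is large for any candidate missing set $S^\ast$; applying the pairwise MIDR condition then forces the transformation's certified pool at $v_1$ to contain $S^\ast$. Since this must hold for every $v$, the certified pool contains the full range $R$ on every input, which bounds $|R|$ by the per-input query budget $Q$. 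The remaining combinatorial problem---maximizing intersection of a random $k$-subset against $Q$ candidate $k$-subsets---is then handled by the hypergeometric concentration bound, and the desired polynomial loss follows.
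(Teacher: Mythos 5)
Your construction and argument are genuinely different from the paper's, but the proposal has a gap at its central step that I don't see how to close.

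The crux of your argument is the claim that the range $R$ of the output mechanism satisfies $|R| \le Q$, because ``MIDR forces every input's evaluation to certify all of $R$.'' This is where the proof breaks. First, $|R| \le Q$ is simply false for MIDR transformations. Consider the transformation that queries $\A$ at $Q$ \emph{fixed} inputs $\mathbf{1}_{P_1},\dots,\mathbf{1}_{P_Q}$, records which $P_i \in \mathcal{H}$, and on any input $v$ outputs $\arg\max\{v\cdot T : T \subseteq P_i \text{ for some observed } P_i\}$. This is MIDR (the pool is fixed, so the output is literally a maximizer over a fixed range), yet its range has exponentially many elements: on input $\mathbf{1}_S$ the output is $S \cap P_{i^*}$, which varies with $S$. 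The quantity that is actually bounded by $Q$ is the number of \emph{maximal generators} $P_i$ of the certified pool, not $|R|$, and your downstream inequality $\Wel(\Mech_{\A_\mathcal{H}}) \le \Exp_S[\max_i |S\cap S_i|]$ silently assumes exactly this: a \emph{fixed} pool of generators $S_1,\dots,S_Q$ that is the same for every input. But your transformation is allowed to be adaptive---the queries made on input $\mathbf{1}_S$ can depend on $S$---and for an adaptive transformation the pool is $\{S_i(S)\}_i$, so the right-hand side becomes $\Exp_S[\max_i |S\cap S_i(S)|]$. That quantity is \emph{not} controlled by hypergeometric concentration: an adaptive transformation that queries $\A(\mathbf{1}_S)$ directly observes $S$ itself whenever $S \in \mathcal{H}$ (probability $1/2$ under your dense $\mathcal{H}$!), at which point $\max_i|S \cap S_i(S)| = k$. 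Your ``adversary picks $v_1$ to expose a missing $S^*$'' step is an attempt to force pool-sharing, but it does not establish it: it only shows a \emph{potential} MIDR violation if the transformation were careless, not that every MIDR transformation must maintain a global input-independent pool of size $Q$.

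More fundamentally, your construction $\mathcal{H} = $ random half of all $k$-subsets is probably too easy for the transformation. Because $\mathcal{H}$ is so dense, an adaptive reduction can, with a handful of queries, certify a size-$(k-1)$ subset of almost any input $S$: there are $n-k+1$ size-$k$ extensions of each $(k-1)$-subset of $S$, each landing in $\mathcal{H}$ independently with probability $1/2$. Whether such a transformation can be made robustly MIDR (i.e., MIDR for \emph{every} $\mathcal{H}$, not just random ones) is delicate, but you have not ruled it out, and the onus is on the proof to do so. This is precisely the difficulty the paper's construction is built to circumvent: rather than making the feasible family dense, it plants two \emph{specific} hidden sets $S,T$ of size $\Theta(n^{1/2+2\epsilon})$ with a carefully sized intersection, and makes $\A_{S,T}$ misbehave only on the rare inputs that overlap $T$ heavily but $S$ lightly. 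The argument then runs through three lemmas: (i) on input $T$ itself, the reduction cannot find $S$ and is stuck with near-zero welfare; (ii) the MIDR inequality $\Wel(\Mech,T) \ge \Exp_{y\sim\Mech(x)}[T\cdot y]$ propagates this low welfare to every input $x \supseteq S$ with $x\cap T = S\cap T$, because $T$ intersects $x$ ``uniformly''; and (iii) the reduction cannot distinguish a typical input $Z$ from such an $x$, so the low-welfare conclusion spreads to almost all of $\Dist$. The hidden-set indistinguishability (not a range-counting argument) is what makes the MIDR constraint bite for adaptive transformations. Your proposal does not have an analogue of step (ii), and without it the propagation from ``low welfare on some input'' to ``low welfare on typical inputs'' is missing.
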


Before going into the full details of the lower bound construction, let us describe the high-level intuition behind Theorem~\ref{thm:main.midr}.  We will have $n$ agents with binary types, where each agent has value $1$ with probability roughly $1/\sqrt{n}$, so that the expected number of agents with value $1$ is roughly $\sqrt{n}$.  Outcomes are also binary, with each agent either getting service or not.  The allocation algorithms we consider are very simple: they just serve all agents with value $1$ (as long as there aren't too many), subject to the following exception.  There are two hidden subsets of agents $S$ and $T$, each of size roughly $\sqrt{n}$, which have a non-trivial intersection.  If the set of agents with value $1$ contains many agents from $T$, but too few agents from $S$ (where $|S \cap T|$ counts as ``too few''), then the algorithm will instead return the zero allocation.  We will write $\A_{S,T}$ for the algorithm with a certain choice of the hidden sets $S$ and $T$.  See Figure~\ref{fig:alg} for an illustration of $\A_{S,T}$.  What we will show is that for any polytime MIDR black-box transformation, there is some choice of $S$ and $T$ for which $\Mech$ substantially reduces the welfare of $\A_{S,T}$.

We establish the result in three steps.  First, we argue that when given black-box access to a random $\A_{S,T}$, a polytime MIDR transformation will generate low expected welfare on input $T$ with high probability.  This step is formalized in Lemma~\ref{lem:mech.T} below.  To see why this is true, consider what the transformation can do on input $T$ (i.e., when the agents in $T$ are precisely the ones with value $1$).  On this input, the oracle $\A_{S,T}$ returns the zero allocation.  And since $S$ and $T$ are chosen at random, the transformation does not know the identity of set $S$ on input $T$.  Moreover, because of the high dimensionality of the space, the transformation will not be able to find set $S$ with any polynomial number of samples.  This means that any input it queries with large intersection with $T$ will generate the zero allocation with high probability.  Because of this, the transformation will necessarily generate low expected welfare on input $T$.  Note that this does not directly imply that the mechanism has low welfare overall, since we have so far argued only about its output on a single input.

In the second step, we argue that on any input with a large enough intersection with $T$, the mechanism must again generate low welfare.  This step is given as Lemma~\ref{lem:mech.S}.  This follows from the MIDR property: since the mechanism generates low welfare on input $T$, the distribution over outputs generated on \emph{any} input must also have low welfare for input $T$.  On input $S$, for example, the oracle $\A_{S,T}$ allocates to all agents in $S$, but this would generate welfare $|S \cap T|$ for input $T$, which (if we choose our parameters carefully) will be too high and violate the MIDR condition.  So the transformation must avoid allocating too much to the agents in $S \cap T$.  This does not complete the proof of Theorem~\ref{thm:main.midr}, though, since most inputs do not have high intersection with $T$.

In the third and final step, we argue that the mechanism must generate low welfare on almost all inputs.  We establish this step in Lemma~\ref{lem:mech.welfare}.  The idea is that, by the curse of dimensionality, the transformation cannot determine by sampling whether or not a random input has high intersection with $T$.  In particular, the only way to determine whether an input vector contains $S \cap T$ is to find an input very close to $T$ (on which $\A_{S,T}$ returns $\emptyset$), which requires a super-polynomial number of samples.  This means that, on most inputs, the transformation cannot rule out that the input contains $S \cap T$.  However, by the previous step, the mechanism must generate low welfare on any such input.  This ultimately leads to the catastrophic loss of welfare in Theorem~\ref{thm:main.midr}.

\medskip

We now proceed with the details of the proof of Theorem~\ref{thm:main.midr}.
Choose some sufficiently small constant $\epsilon > 0$.  For notational convenience, we will define constants $N$, $\eps_{ST}$, $\eps_S$, and $\eps_T$.  These constants will be used to formalize the allocation rule $\A_{S,T}$ described informally above.  Roughly speaking: $N$ is the maximum number of agents who can declare $1$ before the algorithm returns the zero allocation.  Constant $\eps_{ST}$ will define the size of the intersection $|S \cap T|$.  Constants $\eps_S$ and $\eps_T$ determine what is meant by ``too few agents from $S$'' and ``too many agents from $T$,'' respectively, in the description of $\A_{S,T}$ given above.
We will set these to be $\eps_{ST} \triangleq n^{-1/4}$, $\eps_S \triangleq 2n^{-1/4}$, $\eps_{T} \triangleq 16n^{-1/2}$, and $N \triangleq n^{1/2 + 2\epsilon}$.

The distribution $\mathcal{D}$ selects $x\in \Domain$ by independently drawing every coordinate $x_i \sim \text{Bernoulli}\left( \frac {3N} {4n} \right)$.

We say that a pair of subsets of coordinates $S, T \subseteq [n]$ is \emph{valid} if $|S| = |T| = \frac{N}{2}$ and $|S \cap T| = \eps_{ST} N$.
We consider the following family of algorithms, parameterized by a valid pair of subsets $S$ and $T$.

$$A_{S,T}(x) = \begin{cases}
   x &\text{if } |x| \le N \text{ and } \bigg(|x \cap T| \le \eps_T N \text{ or } |x \cap S| \ge \eps_S N \bigg)  \\
   \emptyset &\text{otherwise}
\end{cases}$$

\begin{figure}[t]
  \centering
  \begin{tikzpicture}[scale=0.80]
\tikzset{ST/.style={green, dashed, very thick}}
\tikzset{myset/.style={red, very thick}}
\tikzset{square/.style={black, very thick}}

\tikzset{rx/.style={x radius=#1},ry/.style={y radius=#1}}

\pgfmathsetmacro{\length}{6.5}
\pgfmathsetmacro{\height}{4}
\pgfmathsetmacro{\rad}{(\length-0.5)/4}
\pgfmathsetmacro{\rx}{2}
\pgfmathsetmacro{\ry}{5}

\def\lab{{"a","b","c", "d"}}

\def\xcoords{{0, \length+1,		    0,  \length+1}}
\def\ycoords{{0, 		 0,-\height-1, -\height-1}}

\foreach \i in {1,2,3,4} 
{ 

	\pgfmathsetmacro{\labelz}{\lab[\i-1]}
	\pgfmathsetmacro{\x}{\xcoords[\i-1]}
	\pgfmathsetmacro{\y}{\ycoords[\i-1]}

	\draw[square] (\x, \y) rectangle (\x + \length,\y + \height) node[] at (\x + \length/2,\y - 0.5){(\labelz)};
	
	\draw[ST] (\x + \rad + 1,\y + 1) circle [rx=\rad cm, ry=0.5cm] node[black, left=1.2cm]{$S$};
	\draw[ST] (\x + \length - \rad -1,\y + 1) circle [rx=\rad cm, ry=0.5cm] node[black, right=1.2cm]{$T$};

}

\pgfmathsetmacro{\x}{\xcoords[0]}
\pgfmathsetmacro{\y}{\ycoords[0]}
\draw[myset] (\x + \length/2, \y+\height-\height/3) circle[rx = \rad cm, ry = 0.5cm] node[black]{$|x|> N$};
\node[red] at (\x+0.9*\length, \y + 0.9*\height) {\Huge \xmark};

\pgfmathsetmacro{\x}{\xcoords[1]}
\pgfmathsetmacro{\y}{\ycoords[1]}
\draw[myset] (\x + \length/2, \y+\height-\height/3) circle[rx = \rad cm, ry = 0.5cm] node[black]{$|x|\leq N$};
\node[green] at (\x+0.9*\length, \y + 0.9*\height) {\Huge \checkmark};

\pgfmathsetmacro{\x}{\xcoords[2]}
\pgfmathsetmacro{\y}{\ycoords[2]}
\draw[myset] (\x + 2*\length/3, \y+2) circle[rx = 0.6cm, ry = \rad cm];
\node[black] at (\x+2, \y+\height-1) {$|x\cap T| > \epsilon_T N$};
\node[black] at (\x+2, \y+\height-1.5) {$|x\cap S| < \epsilon_S N$};
\node[red] at (\x+0.9*\length, \y + 0.9*\height) {\Huge \xmark};

\pgfmathsetmacro{\x}{\xcoords[3]}
\pgfmathsetmacro{\y}{\ycoords[3]}
\draw[myset] (\x + \length/2, \y + 1.5) circle[rx = 0.9cm, ry = \rad cm, rotate=-60];
\node[black] at (\x+2, \y+\height-1) {$|x\cap T| > \epsilon_T N$};
\node[black] at (\x+2, \y+\height-1.5) {$|x\cap S| \geq  \epsilon_S N$};
\node[green] at (\x+0.9*\length, \y + 0.9*\height) {\Huge \checkmark};

\end{tikzpicture}
   \caption{Allocation rule $\A_{S,T}$, parameterized by sets $S$ and $T$ (dashed green).  On input $x$ (solid red), the allocation is either the all-zero allocation (red X) or the set $x$ itself (green checkmark).  (a) Any $x$ with $|x| > N$ returns an empty allocation.  (b) Most sets $x$ with $|x| \le N$ return allocation $x$, but (c) if $x$ has a large intersection with $T$ then $\A_{S,T}$ returns the empty allocation, unless (d) $x$ also has a large intersection with $S$.}
  \caption{An illustration of the allocation rule used to prove Theorem~\ref{thm:main.midr}.}
  \label{fig:alg}
\end{figure}

In other words, $A_{S,T}(x)$ simply allocates $1$ to every agent who declared value $1$, as long as there are at most $N$, unless those agents overlap significantly with $T$ but not with $S$.  See Figure~\ref{fig:alg} for an illustration of the allocation rule $A_{S,T}(x)$.
We will define the feasible allocations associated with $S$ and $T$, $\Alloc_{S,T}$, to be all allocations in the range of $A_{S,T}$ and their downward closure.  That is, $R \in \Alloc_{S,T}$ if and only if there exists some $R' \in \text{range}(A_{S,T})$ such that $R \subseteq R'$.

We will make repeated use of the following lemma, which follows directly from the Chernoff-Hoeffding bound.
\begin{lemma}\label{lem:chernoff}
Suppose $x_1, \dotsc, x_k$ are i.i.d.\ binary random variables, with $X = \sum_i x_i$.  Then for any constant $Y \geq 3 \cdot E[X]$, we have $\Prob[ X > Y ] < e^{-Y/4}$.   
\end{lemma}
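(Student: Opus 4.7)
The plan is to apply the standard multiplicative Chernoff bound via the moment generating function technique, then verify that the hypothesis $Y \geq 3 E[X]$ suffices to yield the specific constant $1/4$ in the exponent.

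First I would set $\mu = E[X]$ and, for any $t > 0$, bound $E[e^{t x_i}] = 1 + p_i(e^t - 1) \leq e^{p_i(e^t - 1)}$ for each Bernoulli coordinate $x_i$ with $p_i = E[x_i]$. By independence, $E[e^{tX}] \leq e^{\mu(e^t - 1)}$, so Markov's inequality gives
$$\Prob[X > Y] \leq e^{-tY + \mu(e^t - 1)}.$$
Choosing $t = \ln(Y/\mu)$, which is positive since $Y \geq 3\mu > \mu$, the bound simplifies to $\exp\!\bigl(Y - \mu - Y \ln(Y/\mu)\bigr)$.

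Next I would show the exponent is at most $-Y/4$ under the hypothesis $\mu \leq Y/3$. Viewing the exponent as a function $f(\mu) = Y - \mu - Y \ln(Y/\mu)$ of $\mu$ with $Y$ fixed, its derivative $f'(\mu) = Y/\mu - 1$ is strictly positive on $(0, Y)$, so $f$ is maximized over $(0, Y/3]$ at the right endpoint $\mu = Y/3$, giving $f(Y/3) = Y(2/3 - \ln 3)$. Since $\ln 3 > 11/12 = 2/3 + 1/4$, this yields $2/3 - \ln 3 < -1/4$, and hence $\Prob[X > Y] < e^{-Y/4}$.

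There is no substantive obstacle. The whole argument is a one-shot application of the Chernoff method, and the only nontrivial step is checking that the multiplicative slack of $3$ in the hypothesis $Y \geq 3 E[X]$ is enough to produce the constant $1/4$ in the exponent; this reduces to the numerical inequality $\ln 3 > 11/12$, which holds comfortably. The edge case $\mu = 0$ is handled separately (then $X = 0$ almost surely and the bound is trivial for any $Y > 0$).
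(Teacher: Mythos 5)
Your proof is correct. You and the paper use the same underlying tool (the multiplicative Chernoff bound), but you derive it from scratch via the moment generating function and Markov's inequality and then optimize the exponent over $\mu \leq Y/3$, whereas the paper simply cites the Chernoff--Hoeffding bound in the form $\Prob[X > (1+\delta)E[X]] < e^{-\delta^2 E[X]/(2+\delta)}$, writes $Y = (1+\delta)E[X]$, and checks the numerical inequality $(\delta+1)/4 \leq \delta^2/(2+\delta)$ for $\delta \geq 2$. Your route is more self-contained (no external bound to trust), at the cost of a few extra lines of MGF bookkeeping; the paper's is shorter because it offloads the generating-function work to a citation and reduces the lemma to a one-line algebraic check. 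Both reduce to the same kind of elementary numerical verification at the end (you check $\ln 3 > 11/12$, the paper checks $3\delta^2 - 3\delta - 2 \geq 0$ at $\delta = 2$), and both correctly handle the degenerate case $\mu = 0$.
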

\begin{proof}
Write $Y = (1+\delta)E[X]$.  Then by the Chernoff-Hoeffding bound, $\Prob[ X > (1+\delta)E[X] ] < e^{ - \delta^2 E[X] / (2+\delta) }$.  The result follows by noting that $(\delta+1)/4 \leq \delta^2 / (2+\delta)$ for all $\delta \geq 2$.
\end{proof}

To prove Theorem~\ref{thm:main.midr} we will show that, for any MIDR transformation $\Mech$ with query complexity $e^{n^\epsilon}$, there is an algorithm $A_{S,T}$ satisfying the conditions of the theorem.  
We begin by noting that, for any valid pair $S$ and $T$, algorithm $A_{S,T}$ has welfare at least $N/4$. 

\begin{lemma}\label{lem:alg.welfare}
  For any valid pair $S$ and $T$, $\Wel(A_{S,T}) \ge  \frac N {4} = \Omega(N)$.
\end{lemma}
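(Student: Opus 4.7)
Since the agents have additive binary valuations and $A_{S,T}(x)\in\{x,\emptyset\}$, the welfare on input $x$ equals $|x|\cdot \mathbf{1}_G$, where $G$ denotes the event $A_{S,T}(x)=x$. Hence $\Wel(A_{S,T})=\Exp[|x|\,\mathbf{1}_G] = \Exp[|x|]-\Exp[|x|\,\mathbf{1}_{\bar G}]$, and since $\Exp[|x|]=n\cdot\tfrac{3N}{4n}=\tfrac{3N}{4}$, it suffices to show $\Exp[|x|\,\mathbf{1}_{\bar G}]\leq N/2$. The plan is to show $\Prob[\bar G]$ is vanishingly small and then convert this into a bound on $\Exp[|x|\,\mathbf{1}_{\bar G}]$ using the boundedness of $\Exp[|x|^2]$.

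By the definition of $A_{S,T}$, $\bar G \subseteq E_1 \cup E_2$, where $E_1=\{|x|>N\}$ and $E_2=\{|x\cap T|>\epsilon_T N\}$ (I drop the $|x\cap S|<\epsilon_S N$ conjunct, which only enlarges the event). For $E_1$, $|x|$ is a sum of $n$ independent $\text{Bernoulli}(\tfrac{3N}{4n})$ random variables with mean $3N/4$, so a standard Chernoff--Hoeffding bound yields $\Prob[|x|>N]\leq e^{-\Omega(N)}$. For $E_2$, I apply Lemma~\ref{lem:chernoff} to $|x\cap T|=\sum_{i\in T}x_i$, a sum of $|T|=N/2$ i.i.d.\ $\text{Bernoulli}(\tfrac{3N}{4n})$ random variables with mean $\mu_T=\tfrac{3N^2}{8n}$. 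Provided the parameters satisfy $\epsilon_T N\geq 3\mu_T$---the hypothesis of Lemma~\ref{lem:chernoff}---this yields $\Prob[|x\cap T|>\epsilon_T N]\leq e^{-\epsilon_T N/4}$, which is super-polynomially small.

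Combining via Cauchy--Schwarz,
\[
\Exp[|x|\,\mathbf{1}_{\bar G}]\leq\sqrt{\Exp[|x|^2]\cdot\Prob[\bar G]},
\]
and since $\Exp[|x|^2]=\mathrm{Var}(|x|)+\Exp[|x|]^2=O(N^2)$, the right-hand side is $o(N)$, and in particular at most $N/2$ for $n$ sufficiently large. Therefore $\Wel(A_{S,T})\geq 3N/4-o(N)\geq N/4$, as required. The main (but routine) obstacle is verifying the parameter inequality $\epsilon_T N\geq 3\mu_T$ needed to invoke Lemma~\ref{lem:chernoff}; this reduces to a direct numerical substitution using $\epsilon_T=16n^{-1/2}$ and $N=n^{1/2+2\epsilon}$, and is valid in the chosen parameter regime.
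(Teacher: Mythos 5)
Your proof is correct and reaches the stated bound, but it takes a genuinely different route from the paper's. The paper defines a ``good event'' $\{|x| \in [N/2, N]\} \cap \{|x \cap T| \leq \eps_T N\}$, shows (by a union bound over the same two Chernoff estimates you use) that it has probability at least $1/2$, and observes that conditionally the welfare is at least $|x| \ge N/2$; multiplying gives $N/4$. You instead note that $\Exp[|x|] = 3N/4$ exactly and control the bad-event contribution $\Exp[|x|\mathbf{1}_{\bar G}]$ via Cauchy--Schwarz and the second moment $\Exp[|x|^2] = O(N^2)$. Your decomposition removes the need for the two-sided concentration of $|x|$ around its mean (the paper needs $|x| \ge N/2$, not just $|x| \le N$), and yields a slightly sharper asymptotic bound of $3N/4 - o(N)$. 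The trade-off is that you lean on $\Prob[\bar G] = o(1)$ (or at least small enough that $\sqrt{\Exp[|x|^2]\Prob[\bar G]} \le N/2$), whereas the paper's conditional argument is a bit more forgiving of looser tail bounds. One minor bookkeeping point: you enlarge $\bar G$ by dropping the $|x\cap S| < \eps_S N$ conjunct; that is sound (it only increases $\Prob[\bar G]$) and matches the paper, which similarly doesn't exploit the $S$-clause for the welfare lower bound. Both proofs hinge on the same hypothesis check $\eps_T N \ge 3\mu_T$ for Lemma~\ref{lem:chernoff}, so your ``routine numerical substitution'' remark is appropriate; this check is shared verbatim with the paper's argument.
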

\begin{proof}
  Note first that Chernoff-Hoeffding bounds imply that 
 \[
	\Prob_{x \sim \Dist}[ |x| \in [N/2, N]] = \Prob_{x \sim \Dist}[ |x| - \E[|x|] > N/4 ] < e^{- N / 64} < 1/4
 \] 
  for sufficiently large $N$.  

  Next, note that each of the $N/2$ coordinates in $T$ will lie in $x$
independently with probability $\tfrac{3N}{4n}$.  The expected size of $x \cap
T$ is therefore $\tfrac{3N^2}{8n}$.  By Lemma~\ref{lem:chernoff}, $\Prob_{x
\sim \Dist}[ |x \cap T| > \eps_T N] < e^{-\eps_T N / 4}$ since $\eps_T N >
\tfrac{9N^2}{8n}$.  Again, this probability is at most $1/4$ for sufficiently
large $N$.

  By taking a union bound over these events, we conclude that $|x| \in [N/2, N]$ and $|x \cap T| \leq \eps_T N$ with probability at least $1/2$.  This implies that $A_{S,T}(x) = x$, with $|x| \geq N/2$, with probability at least $1/2$.  The expected welfare of $A_{S,T}$ is therefore at least $N/4$.
\end{proof}

Fix a MIDR black-box transformation $\Mech$.  For notational convenience, we will write $\Mech_{S,T} = \Mech_{A_{S,T}}$.  To prove Theorem~\ref{thm:main.midr}, we wish to find a valid pair $S$ and $T$ such that $\Wel(\Mech_{S,T})$ is small.  To this end, we will imagine selecting the sets $S$ and $T$ uniformly at random, subject to being a valid pair.  Write $\Gamma$ for the uniform distribution over all valid pairs $(S,T)$.  Note that drawing $(S,T) \sim \Gamma$ is equivalent to the following process: first choose the $N/2$ coordinates of $S$, then choose the $\eps_{ST} N$ coordinates of $S$ to form $T \cap S$, then choose the $(1/2 - \eps_{ST})N$ coordinates of $[n] \backslash S$ to form $T \backslash S$.  We will write $\Gamma_T$ for the uniform distribution over sets $S$ such that $(S,T)$ is a valid pair, and similarly for $\Gamma_S$.

The next lemma shows that, with high probability (over the random choice of $S$ and $T$ and over any randomness in the transformation), $\Mech_{S,T}$ will have low welfare \emph{on input $T$.}  Note that this lemma does not make use of any incentive properties of $\Mech_{S,T}$.  Rather, it follows from the assumption that $\Mech_{S,T}$ has query complexity at most $e^{n^\epsilon}$, which is not enough samples to find a high-welfare outcome on input $T$.

\begin{lemma}\label{lem:mech.T}
  $\Prob_{(S,T)\sim \Gamma}[\Wel(\Mech_{S,T}, T) > \eps_T N] < e^{n^\epsilon} \cdot e^{-(\epsilon_S - \epsilon_{ST})N/4} = O(e^{-n^{1/4}})$.
\end{lemma}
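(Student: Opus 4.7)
My plan is an information-theoretic argument: on input $T$, the mechanism cannot locate the hidden set $S$ with so few queries, and without information about $S$ the downward-closed feasibility constraint forces it to output low-welfare allocations. The key device is a coupling of $\Mech$'s execution against $A_{S,T}$ with its execution against a pessimistic $S$-oblivious oracle
\[
A_T(x) = \begin{cases} x & \text{if } |x| \le N \text{ and } |x \cap T| \le \eps_T N, \\ \emptyset & \text{otherwise.}\end{cases}
\]
These two oracles agree on every $x$ except the \emph{revealing} inputs, namely those with $|x| \le N$, $|x \cap T| > \eps_T N$, and $|x \cap S| \ge \eps_S N$, on which $A_{S,T}(x)=x$ while $A_T(x)=\emptyset$.

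First I would observe that, until the mechanism queries a revealing point, the two executions are identical, so the query sequence up to that moment is a deterministic function of $T$ and the coin flips of $\Mech$, and in particular is \emph{independent of $S$}. Second, I would argue that if no revealing query is ever made, every observed non-empty response $x_i$ has $|x_i \cap T| \le \eps_T N$; since $\Mech_{S,T}(T)$ must lie in the downward closure of the observed outputs, this forces $\Wel(\Mech_{S,T}, T) = \Exp[|y \cap T|] \le \eps_T N$. Consequently the event in the lemma is contained in the event that some query in the execution is revealing.

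It remains to bound the probability of a revealing query. Condition on $T$ (the final bound does not depend on it). Under oracle $A_T$ the query sequence $x_1, \dots, x_K$ of length $K \le e^{n^\eps}$ is a fixed list depending only on $T$ and $\Mech$'s randomness, hence independent of $S \sim \Gamma_T$. For each $x_i$, decompose $|x_i \cap S| = |x_i \cap (S \cap T)| + |x_i \cap (S \setminus T)|$; since the first term is at most $|S \cap T| = \eps_{ST} N$, a revealing query requires $|x_i \cap (S \setminus T)| \ge (\eps_S - \eps_{ST}) N$. Because $S \setminus T$ is a uniformly random $(1/2 - \eps_{ST})N$-subset of $[n] \setminus T$ and $|x_i \setminus T| \le N$, the quantity $|x_i \cap (S \setminus T)|$ is hypergeometric with mean at most $N^2/(2n)$, which is at most $(\eps_S - \eps_{ST})N/3$ for our parameter choices (using $\eps < 1/8$ so that $N \ll n^{3/4}$). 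Applying Lemma~\ref{lem:chernoff} (the bound extends to sampling without replacement via Hoeffding) gives a per-query bound of $e^{-(\eps_S - \eps_{ST})N/4}$, and a union bound over the $e^{n^\eps}$ queries produces the claimed inequality. Substituting $N = n^{1/2 + 2\eps}$ shows the exponent $n^\eps - (\eps_S - \eps_{ST})N/4$ is dominated by the negative term, yielding $O(e^{-n^{1/4}})$.

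The main obstacle I anticipate is handling the adaptivity of the query sequence: naively the $i$-th query depends on $S$ through the previous responses, ruining any per-query union bound over a random $S$. The coupling with the $S$-oblivious oracle $A_T$ is exactly what converts the adaptive, $S$-dependent query list into a fixed list that is independent of $S$; after that, the remainder is a routine Chernoff estimate on a hypergeometric random variable.
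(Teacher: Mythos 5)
Your proof is correct and takes essentially the same approach as the paper: identify that high welfare on input $T$ requires a ``revealing'' query with $|x\cap(S\setminus T)|\ge(\eps_S-\eps_{ST})N$, bound that per-query probability via Chernoff (the paper also uses Lemma~\ref{lem:chernoff} after treating the draw of $S\setminus T$ as $N$ independent Bernoulli trials; your remark about passing to the hypergeometric via Hoeffding is a harmless refinement), and take a union bound over the $e^{n^\eps}$ queries. The one place where you go beyond the paper is the explicit coupling with the $S$-oblivious oracle $A_T$: the paper's union bound ``over the samples taken by the mechanism'' quietly ignores that the query sequence is adaptive and hence $S$-dependent, whereas your coupling observation — that the execution against $A_{S,T}$ and against $A_T$ coincide until the first revealing query, so the relevant query list is a deterministic function of $(T,\text{coins})$ independent of $S$ — is exactly what makes the union bound rigorous, and is worth keeping.
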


\begin{proof}
Choose some valid pair $S$ and $T$.  In order for $\Wel(\Mech_{S,T}, T)$ to be
greater than $\eps_T N$, it must be that on input $T$, the transformation must
query at least one input $x$ such that $|A_{S,T}(x) \cap T| > \eps_T N$.  
Any such $x$ must satisfy $|x| \leq N$, $| x \cap T| > \eps_T N$, and $A_{S,T}(x)
= x$.  From the definition of $A_{S,T}(x)$, this can occur only if $|x \cap S|
\geq \eps_S N$.
And since $|S \cap T| = \eps_{ST} N$, this requires that $|x \cap (S \setminus T)| \ge (\eps_S - \eps_{ST}) N$.

Now fix any $x$ and any set $T$ such that $|T| = N/2$ and $|x \backslash T| \leq |x| \leq N$, and consider the event $|x \cap (S \setminus T)| \ge (\eps_S - \eps_{ST}) N$ with respect to a randomly chosen $S \sim \Gamma_T$.  As each coordinate in the complement of $T$ is in $S \setminus T$ independently with probability $p \triangleq \frac{|S \setminus T|}{n - |T|} = \frac{(1/2 - \epsilon_{ST})N}{n - N/2}$, the event that $|x \cap (S \setminus T)| \ge (\eps_S - \eps_{ST}) N$ is at most the probability that $N$ indicator variables, each $1$ with probability $p$, sum to at least $(\eps_S - \eps_{ST}) N$.  

By Lemma~\ref{lem:chernoff}, this probability is at most $e^{-(\epsilon_S - \epsilon_{ST})N/4}$, as long as
$$(\epsilon_S - \epsilon_{ST})N > 2N \cdot \frac{(1/2 - \epsilon_{ST})N}{n - N/2}.$$
We note that since $\epsilon_{S} = 2 \epsilon_{ST}$, this is implied by $\epsilon_{ST} > 2N/n$, which is satisfied by the setting of our parameters.

We conclude that, for any $T$ and any query point $x$, $\Prob_{S \sim \Gamma_T}[|A_{S,T}(x) \cap T| \geq \eps_T N] < e^{-(\eps_S - \eps_{ST})N/4}$.  
Taking a union bound over the (at most) $e^{n^\epsilon}$ samples taken by the mechanism yields the desired result.
\end{proof}

We have now established that mechanism $\Mech_{S,T}$ has low welfare on input $T$, with high probability.  The MIDR property therefore implies that $\Mech_{S,T}$ cannot generate outcomes that have high (expected) intersection with $T$, on any input, again with high probability.  Our next lemma uses this to show that,   
with high probability, $\Mech_{S,T}$ will have low welfare on any input $x$ that contains $S$ and for which $x \cap T = S \cap T$.  For convenience, define $\zeta \triangleq 2 e^{n^\epsilon} e^{-\epsilon_T N/6}$.  For our choice of parameters, $\zeta \to 0$ as $n \to \infty$ faster than any polynomial in $n$.

\begin{lemma}\label{lem:mech.S}
Choose $S$ with $|S| = N/2$ and some $x \supset S$ with $|x| \in [N/2, N]$.  Then 
$$\Exp_{T \sim \Gamma_S }[ \Wel(\Mech_{S,T},x) \big\rvert x \cap T = S \cap T] \leq \frac {4N(\eps_T + \zeta)} {\eps_{ST}} = O\left( \frac{N}{n^{-1/4}} \right) .$$
\end{lemma}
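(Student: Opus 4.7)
The plan is to combine the MIDR property (applied via Lemma~\ref{lem:mech.T}) with an averaging argument over $T$ and a symmetrization of the mechanism.

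First, I would establish a conditional version of Lemma~\ref{lem:mech.T}: with probability at least $1 - \zeta/2$ over $T \sim \Gamma_S \mid \text{cond}$, the event $E_T := \{\Wel(\Mech_{S,T}, T) \le \eps_T N\}$ holds. The argument mirrors that of Lemma~\ref{lem:mech.T}: for each of the at most $e^{n^\eps}$ queries the mechanism makes on input $T$, I Chernoff-bound the probability that the oracle response is $\emptyset$ (which requires $|v \cap T| > \eps_T N$ while $|v \cap S| < \eps_S N$); the factor $e^{-\eps_T N /6}$ in $\zeta$ comes from this Chernoff bound, and the $e^{n^\eps}$ factor from the union bound. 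On $E_T$, MIDR applied with $(v, v') = (T, x)$ yields $\Exp_{y \sim \Mech_{S,T}(x)}[|y \cap T|] \le \eps_T N$, and bounding the complement of $E_T$ trivially by $N$:
\[
\Exp_{T, y}[|y \cap T| \mid \text{cond}] \;\le\; \eps_T N + N\zeta/2.
\]

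Second, I would symmetrize $\Mech$ by averaging it over all coordinate permutations $\pi$ that fix the set $x$ setwise. A convex combination of MIDR mechanisms is again MIDR (as the MIDR inequality is linear in the allocation distribution), and the expected welfare on input $x$ is unchanged. Under the symmetrized mechanism, $\Pr_{T,y}[i \in y \mid \text{cond}]$ depends on $i$ only through whether $i \in x$ or $i \in [n] \setminus x$; call these common values $\phi$ and $\phi_{\bar x}$, respectively. Permutations that do not fix $S$ effectively randomize the location of $S$ within $x$ as seen by the mechanism, but since Lemma~\ref{lem:mech.T}'s bound depends only on the cardinalities $|S|, |T|, |S \cap T|$ (all invariant under $\pi$), the averaged bound above is preserved.

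Third, I would expand the LHS of the averaged MIDR inequality. Under cond, each $i \in x$ is in $T$ with total (symmetrized) probability $|S \cap T|/|x| = \eps_{ST} N / |x|$, while each $i \in [n] \setminus x$ is in $T$ with probability $(1/2 - \eps_{ST})N/(n - |x|)$. Therefore
\[
\eps_{ST} N \cdot \phi \;+\; (1/2 - \eps_{ST}) N \cdot \phi_{\bar x} \;\le\; \eps_T N + N\zeta/2.
\]
Dropping the nonnegative $\phi_{\bar x}$ term gives $\phi \le (\eps_T + \zeta/2)/\eps_{ST}$, and hence
\[
\Exp_T[\Wel(\Mech_{S,T}, x) \mid \text{cond}] \;=\; |x| \cdot \phi \;\le\; \frac{N(\eps_T + \zeta/2)}{\eps_{ST}} \;\le\; \frac{4N(\eps_T + \zeta)}{\eps_{ST}}.
\]

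The hard part will be the symmetrization step. The subtle issue is that permutations fixing $x$ but not $S$ alter the oracle $A_{S,T}$ to $A_{\pi^{-1}(S), \pi^{-1}(T)}$, effectively randomizing the hidden $S$, and one has to verify that the bound from Lemma~\ref{lem:mech.T} applies uniformly across the resulting oracle distribution. Because that bound is obtained by a union bound over queries whose failure probabilities depend only on the set sizes (and not on which coordinates are in $S$), this works out, but making it fully rigorous — in particular handling the correlation between the $T$-randomness, the $\pi$-randomness, and the mechanism's adaptive behavior — is where the bulk of the technical work lies.
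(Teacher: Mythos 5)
Your high-level skeleton (MIDR applied to the pair $(T,x)$, plus an upper bound on $\Wel(\Mech_{S,T},T)$, averaged over $T$) matches the paper's, but the crucial step is different and, as written, has a gap.

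The paper's proof hinges on a second indistinguishability argument that is entirely separate from Lemma~\ref{lem:mech.T}: it introduces a baseline algorithm $A$ (with no hidden sets) and proves that \emph{on input $x$}, the transformation cannot distinguish $A_{S,T}$ from $A$ except with probability $\zeta = 2e^{n^\eps}e^{-\eps_T N/6}$ over $T$ and the mechanism's coins. This is what justifies replacing $\Mech_{S,T}(x)$ by $\Mech_A(x)$, whose output distribution is \emph{independent of $T$ by construction}, so that $\Exp_{T}\bigl[\Exp_{y\sim\Mech_A(x)}[T\cdot y]\bigr]$ factors into $\sum_i \Pr[i\in T]\Pr[i\in y]$. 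Your Step 3 performs exactly this factorization,
$\Exp_{T,y}[|y\cap T|\mid\mathrm{cond}] = \eps_{ST}N\phi + (1/2-\eps_{ST})N\phi_{\bar x}$,
but you never establish the required (approximate) independence of $y$ and $T$. In general $y\sim\Mech_{S,T}(x)$ \emph{does} depend on $T$ through the oracle's responses, and the dependence is adversarial in precisely the wrong direction: a mechanism that learns something about $T$ would avoid allocating to $T$, making $\Pr[i\in T,\,i\in y]$ \emph{smaller} than $\Pr[i\in T]\Pr[i\in y]$, so the inequality you need ($\Exp_{T,y}[|y\cap T|]\ge \eps_{ST}N\phi$) would fail. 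The symmetrization over permutations fixing $x$ is orthogonal to this: it flattens the dependence of the marginals on the coordinate $i$, but does nothing to decouple $y$ from $T$.

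Two smaller issues. First, your $\zeta$ is defined via the exponent $e^{-\eps_T N/6}$, which in the paper arises from the separate $A$-vs-$A_{S,T}$ indistinguishability bound (two Chernoff events on the queries near $x$), \emph{not} from the Lemma~\ref{lem:mech.T}-type bound for queries on input $T$ (which gives $e^{-(\eps_S-\eps_{ST})N/4}$); you conflate the two. Second, your closing paragraph flags a correlation difficulty, but locates it in whether Lemma~\ref{lem:mech.T} transfers to the symmetrized mechanism; the real obstacle is the $(T,y)$-factorization in Step 3, and asserting that it ``works out'' because Chernoff bounds ``depend only on set sizes'' does not address it. If you instead inserted the paper's baseline-$A$ coupling argument before Step 3, your symmetrization would become unnecessary and the rest of the computation would go through.
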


\begin{proof}
Fix some $x$ and some $S \subseteq x$ with $|S| = N/2$.  For any $T$ such that $S$ and $T$ are a valid pair, and for which $T \cap S = T \cap x$, the MIDR property applied to mechanism $\Mech_{S,T}$ implies that
  $$
  \Wel(\Mech_{S,T}, T) \ge \Exp_{y \sim \Mech_{S,T}(x)}[ T \cdot y ].
  $$
Taking an expectation over a uniformly random $T$ satisfying the conditions above, we have that:
\begin{equation}
\label{eq.T}
  \Exp_{T\sim \Gamma_S }[ \Wel(\Mech_{S,T}, T) ] 
		\ge \Exp_{T\sim \Gamma_S }[ \Exp_{y \sim \Mech_{S,T}(x)}[ T \cdot y ] ].
\end{equation}
The left hand side is at most $2 \eps_T N$, by Lemma~\ref{lem:mech.T} plus the fact that $\Wel(\Mech_{S,T}, T) \leq |T| = N/2$ unconditionally.  To bound the right-hand side, we will argue that the transformation is unlikely to gain information about the identity of set $T$ while querying on input $x$, over randomness in the transformation and over the choice of set $T$.

To establish this claim, consider a baseline algorithm $A$ such that $A(x) = x$ for all $|x| \leq N$ and $A(x) = \emptyset$ for all $|x| > N$.  That is, $A$ behaves like $A_{S,T}$ except that there are no hidden sets $S$ and $T$.  We will show that the transformation cannot distinguish $A_{S,T}$ from $A$, which implies that the transformation gains no information about the identity of set $T$.  More precisely, we will bound the probability, over $T$ and any randomness in the transformation, that on input $x$, the transformation queries any point $z$ such that $A(z) \neq A_{S,T}(z)$.  In order for $A(z) \neq A_{S,T}(z)$, 
it must be that either
\begin{enumerate}
  \item $|(z \cap T) \setminus S| > \eps_T N / 2$, or 
  \item $|(z \cap T) \cap S| > \eps_T N / 2$ and $|z \cap S| \leq \eps_S N$.
\end{enumerate}
To bound the probability of these events, let us suppose that the transformation knows the identity of set $S$ in addition to the input $x$.  The likelihood of the first event is then maximized if all coordinates of $z$ lie outside of $S$.  Since $T \setminus S$ is uniformly random over the coordinates outside $S$, the probability of the first event is bounded by the probability that $N$ indicator variables, each set to $1$ with probability $\frac{|T \setminus S|}{n - |S|} = \frac{N/2 - \eps_{ST}N}{n - N/2}$, has sum greater than $\eps_T N / 2$.  By Lemma~\ref{lem:chernoff}, this occurs with probability at most $e^{-\epsilon_T N / 6}$, as long as $\epsilon_T N / 2 > 2N \frac{N/2 - \eps_{ST}N}{n - N/2}$.  Note that this is implied by $\epsilon_T > 2N/n$, which is true for our choice of parameters.

The second event is most likely if $|z \cap S| = \eps_S N$, in which case it occurs if the $\eps_S N$ coordinates of $S$ chosen to be in $z \cap S$ contain $\eps_T N /2$ coordinates of $T$.  As each coordinate of $S$ is equally likely to be in $T$, this is precisely the probability that $\eps_S N$ indicator variables, each set to $1$ with probability $|S \cap T| / |S| = 2\eps_{ST}$, has sum greater than $\eps_T N / 2$.  By Lemma~\ref{lem:chernoff}, this probability is at most $e^{-\epsilon_T N/6}$, as long as $\epsilon_T/2 > 4 \epsilon_{ST} \epsilon_S$.  Assuming $\epsilon_{ST} = 2\epsilon_S$, this is satisfied as long as $\epsilon_{T} > 16\epsilon_S^2$, which is true for our choice of parameters.

Taking a union bound over these two events applied to the $e^{n^\epsilon}$
queries made by the mechanism on input $x$, we have that with probability at
least $\zeta \triangleq 2 e^{n^\epsilon} e^{-\epsilon_T N/6}$, $\Mech_{S,T}(x)$
is equal (as a distribution) to $\Mech_A(x)$.  Note that $\zeta$ approaches $0$
with $n$ faster than any polynomial, as long as $\epsilon_T N > n^{2\epsilon}$.
We therefore have that 
\begin{align*}
\Exp_{T\sim \Gamma_S }[ \Exp_{y \sim \Mech_{S,T}(x)}[ T \cdot y ] ] 
& \geq \Exp_{T\sim \Gamma_S}[ \Exp_{y \sim \Mech_A(x)}[ T \cdot y ] ] - \zeta N\\
& \geq \Exp_{T\sim \Gamma_S }[ \Exp_{y \sim \Mech_A(x)}[ T \cdot y \cdot x ] ] - \zeta N\\
& \geq \frac{\epsilon_{ST}}{2} \Exp_{T\sim \Gamma_S}[\Wel(\Mech_A, x)] - \zeta N \\
& \geq \frac{\epsilon_{ST}}{2} \Exp_{T \sim \Gamma_S}[\Wel(\Mech_{S,T}, x)] - 2\zeta N.
\end{align*} 
Where in the first and last inequalities we used that, conditional on being
non-identical, $\Mech_{S,T}$ and $\Mech_A$ can have welfares differing by at
most $N$.  We also used that, from our choice of randomness over $T$, each
coordinate of $x$ is contained in $T$ with probability at most $|S\cap T|/|S| =
\eps_{ST}/2$.  

Applying this inequality to \eqref{eq.T}, we conclude that
\begin{equation}
\label{eq.zeta}
  2 \eps_T N + 2 \zeta N \geq \frac{\epsilon_{ST}}{2} \Exp_{T \sim \Gamma_S}[ \Wel(\Mech_{S,T}, x) ]
\end{equation}
which gives the desired result. 
\end{proof}

Finally, we show that the welfare bound from the previous lemma extends to \emph{any} input that does not have a large intersection with $T$.  This is because the transformation wil not be able to distinguish any such input from one satisfying the conditions of Lemma~\ref{lem:mech.S}, with high probability.

\begin{lemma}\label{lem:mech.welfare}
Fix any $Z$ with $|Z| \in [N/2, N]$.  Then 
$$\Exp_{(S,T) \sim \Gamma }[\Wel(\Mech_{S,T}, Z) \big\rvert |Z \cap T| \leq \epsilon_T N /2 ] 
	\le \frac {5 \eps_T N} {\eps_{ST}} 
	= O\left( \frac{N}{n^{1/4}} \right).$$
\end{lemma}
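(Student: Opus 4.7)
The plan is to extend Lemma~\ref{lem:mech.S} to handle a general $Z$ via an MIDR comparison. For any $(S,T)$ in the conditioning event $\mathcal{E} = \{|Z \cap T| \le \eps_T N/2\}$, I first split the welfare along $T$:
\[
\Wel(\Mech_{S,T}, Z) \le |Z \cap T| + \Exp_{y \sim \Mech_{S,T}(Z)}[(Z \setminus T) \cdot y] \le \frac{\eps_T N}{2} + \Exp_y[(Z \setminus T) \cdot y],
\]
so it suffices to bound the second term.

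To bound $\Exp_y[(Z \setminus T) \cdot y]$, I construct an auxiliary input $x = x_{S,T}$ that matches the hypotheses of Lemma~\ref{lem:mech.S} and covers $Z \setminus T$. If $|Z \setminus T| \le N/2$ I take $x = S \cup (Z \setminus T)$; otherwise I partition $Z \setminus T$ into two halves $W_1, W_2$ with $|W_i| \le N/2$ and set $x_i = S \cup W_i$. In every case $|x_i| \in [N/2, N]$, $x_i \supseteq S$, and $x_i \cap T = S \cap T$, because $W_i \cap T = \emptyset$ by construction. MIDR then gives $\Wel(\Mech_{S,T}, x_i) \ge \Exp_{y \sim \Mech_{S,T}(Z)}[x_i \cdot y] \ge \Exp_y[W_i \cdot y]$, and summing the (at most two) parts yields
\[
\Wel(\Mech_{S,T}, Z) \le \frac{\eps_T N}{2} + \Wel(\Mech_{S,T}, x_1) + \Wel(\Mech_{S,T}, x_2).
\]

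The last step is to show $\Exp_{(S,T) \sim \Gamma}[\Wel(\Mech_{S,T}, x_i) \mid \mathcal{E}] = O(\eps_T N / \eps_{ST})$, which I do by re-running the proof of Lemma~\ref{lem:mech.S} on our $T$-dependent $x_i$. The MIDR inequality $\Wel(\Mech_{S,T}, T) \ge \Exp_{y \sim \Mech_{S,T}(x_i)}[T \cdot y]$ still holds; its LHS has expectation $O(\eps_T N)$ via Lemma~\ref{lem:mech.T}; and the baseline-coupling argument showing that $\Mech_{S,T}(x_i)$ equals $\Mech_A(x_i)$ in distribution with probability at least $1 - \zeta$ also goes through. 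The key computation of $\Exp_T[T \cdot y \cdot x_i]$ reduces to a sum over coordinates of $S$ only, because $x_i \setminus S \subseteq W_i \subseteq Z \setminus T$ is deterministically disjoint from $T$; each such coordinate contributes with probability $\Theta(\eps_{ST})$, producing the $\eps_{ST}$ in the denominator.

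The step I expect to require the most care is justifying the baseline-coupling bound when the auxiliary input $x_i$ depends on $T$: a priori the queries the mechanism makes on input $x_i$ could be adversarially correlated with $T$. The saving grace is that $x_i$'s dependence on $T$ is limited to which elements of a fixed partition piece of $Z$ happen to land in $T$, a set of size at most $\eps_T N / 2$ under $\mathcal{E}$; the conditional distribution of $T \setminus S$ over $[n] \setminus (S \cup W_i)$ remains essentially uniform, so the per-query probability bound $2 e^{-\eps_T N / 6}$ used in Lemma~\ref{lem:mech.S}'s proof is essentially unchanged, and the same union bound over the $e^{n^\eps}$ queries yields the same $\zeta$. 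Combining the pieces gives the claimed $5 \eps_T N / \eps_{ST}$ bound after absorbing absolute constants.
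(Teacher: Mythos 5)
Your approach is genuinely different from the paper's, and while the high-level decomposition is natural, there is a gap that your ``re-running Lemma~\ref{lem:mech.S}'' step glosses over.

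The paper does not decompose $\Wel(\Mech_{S,T},Z)$ along $T$. Instead it uses a re-conditioning trick: since $\Wel(\Mech_A,Z)$ (welfare of the baseline $A$ with no hidden sets) is a fixed number that does not depend on $(S,T)$, it can be rewritten as a conditional expectation under \emph{any} conditional law of $(S,T)$, and the paper chooses the conditional $\{S \subseteq Z,\ S\cap T = Z\cap T\}$, which is exactly the conditioning appearing in Lemma~\ref{lem:mech.S} with $x = Z$. It then swaps back to $\Mech_{S,T}$ (paying $\zeta N$) and applies Lemma~\ref{lem:mech.S} as a black box, with $x=Z$ held fixed. Your route avoids this device and instead constructs auxiliary inputs $x_i = S \cup W_i$ with $W_i \subseteq Z \setminus T$.

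The gap is that $x_i$ depends on $T$, and this breaks the expectation interchange at the heart of Lemma~\ref{lem:mech.S}'s proof — not merely the baseline-coupling bound that you flag as the hard part. In the proof of Lemma~\ref{lem:mech.S}, once $\Mech_{S,T}(x)$ is replaced by $\Mech_A(x)$, the crucial step is writing
\[
\Exp_{T}\bigl[\,\Exp_{y\sim\Mech_A(x)}[T\cdot y]\,\bigr] \;=\; \Exp_{y\sim\Mech_A(x)}\bigl[\,\Exp_T[T\cdot y]\,\bigr],
\]
which is legitimate because $x$ is \emph{fixed} and therefore the law of $y\sim\Mech_A(x)$ does not depend on $T$. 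When $x_i = x_i(T)$, the distribution of $y\sim\Mech_A(x_i(T))$ \emph{does} depend on $T$, so $y$ and $T$ are correlated and this interchange is invalid. Your ``saving grace'' paragraph addresses the per-query distinguishing probability (the $\zeta$ bound), but not this correlation: after the coupling to $\Mech_A$, you still face $\Exp_T\bigl[T\cdot y(T)\bigr]$ with $y$ and $T$ entangled via $x_i$, and there is no reason the covariance term is small. Without some additional conditioning device that freezes $x_i$ independently of the remaining randomness in $T$ (which is precisely what the paper's re-conditioning step achieves for $x = Z$), the argument does not go through as written. I'd recommend either reproducing the paper's device — bounce through $\Wel(\Mech_A,Z)$ and re-sample $(S,T)$ from the convenient conditional — or making $x_i$ deterministic given $Z$ and $S$ (e.g., partition $Z$ itself, not $Z\setminus T$) and then explicitly controlling the small discrepancy $x_i\cap T \neq S\cap T$ that this introduces.
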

\begin{proof}
We will apply an argument very similar to the proof of Lemma~\ref{lem:mech.S}.  As in the previous lemma, the probability that the transformation can distinguish between $\Mech_{S,T}$ and $\Mech_A$ on input $Z$ is at most $\zeta \triangleq 2 e^{n^\epsilon} e^{-\epsilon_T N/6}$, even if the transformation is made aware of the identity of set $S$ and of $Z \cap T$, and the probability is only with respect to randomness in $T \backslash Z$.  
We therefore have

\begin{align*}
\Exp_{(S,T) \sim \Gamma}[ \Wel(\Mech_{S,T}, Z) ] 
& \leq \zeta N + \Wel(\Mech_A, Z)\\
& = \zeta N + \Exp_{(S,T) \sim \Gamma}[ \Wel(\Mech_A, Z) \big \rvert S \subseteq Z, S \cap T = z \cap T]\\
& \leq 2\zeta N + \Exp_{(S,T) \sim \Gamma}[ \Wel(\Mech_{S,T}, Z) \big\rvert S \subseteq Z, S \cap T = z \cap T ]\\
& \leq 4N \frac{\eps_T}{\eps_{ST}} + 2\zeta N + 4\zeta N / \eps_{ST}
\end{align*} 
which is at most $5N \frac{\eps_T}{\eps_{ST}}$ for sufficiently large $N$, where in the second-to-last inequality we used the fact that $\Mech_A(Z)$ and $\Mech_{S,T}(Z)$ are identical with probability at least $1-\zeta$ under the conditions of Lemma~\ref{lem:mech.S} (as shown in the proof of that lemma), and the last inequality is Lemma~\ref{lem:mech.S}.
\end{proof}

To complete the proof of Theorem~\ref{thm:main.midr}, we recall that $x \sim \Dist$ satisfies $|x| \in [N/2, N]$ with probability at least $1 - e^{-N/64}$, and the probability that $|Z \cap T| > \epsilon_T N / 2$ is at most $e^{-\epsilon_T N / 6}$.  Thus, by Lemma~\ref{lem:mech.welfare} plus the fact that no allocation can generate welfare greater than $N$, we have that 
$$\Exp_{S,T \sim \Gamma}[\Wel(\Mech_{S,T})] \le 5 \frac{\eps_T}{\eps_{ST}} N + N \cdot (e^{-N/64} + e^{-\epsilon_T N / 6}) \leq 6 \cdot 16 \cdot N / n^{1/4}$$
for our setting of $\eps_T$ and $\eps_{ST}$.  There must therefore be a particular choice of $S$ and $T$ such that $\Wel(\Mech_{S,T}) \leq 6 \cdot 16 \cdot N/n^{1/4}$.  Since $\Wel(A_{S,T}) \geq N/4$ by Lemma~\ref{lem:alg.welfare}, the theorem holds for this particular choice of $A_{S,T}$, for any $c > 1/4$.

\section{A Lower Bound for Multi-Dimensional Transformations}\label{sec:bic}

We now move on to our main result, which is a lower bound for multi-dimensional DSIC transformations.  Our construction uses only a single agent, and therefore also shows a lower bound for multi-dimensional BIC transformations.

To show a lower bound for multi-dimensional transformations, we consider a slightly different setting with a single agent and multiple goods.
There is now a single agent, who has an additive valuation over $n$ goods.\footnote{In a slight abuse of notation, we are using $n$ for the number of goods in this setting rather than the number of agents.  This is intentional, to draw a parallel between this parameter and the number of agents in Section~\ref{sec:midr}.}  A type is still an $n$-dimensional vector of values, say $(x_1, \dotsc, x_n)$, where $x_i$ is the agent's value for good $i$.  The outcome space is binary: an outcome is a binary vector $(y_1, \dotsc, y_n) \in \{0,1\}^n$.  However, unlike in the previous section, agent values are not necessarily binary.  We are still concerned with welfare maximization, and we will focus on mechanisms that are DSIC.  That is, mechanisms for which the agent maximizes their expected utility by declaring their true type. 

\subsection{Construction}

We will set parameters $N$, $\eps_T, \eps_S, \eps_{ST}$ precisely as in Section~\ref{sec:midr} for the proof of Theorem~\ref{thm:main.midr}.
The type space will be $\Domain = \{0,1,\alpha\}^n$, where we will choose $\alpha = 2 \epsilon_{ST}^{-1} > 1$.

The distribution $\mathcal{D}$ selects $x \in \Domain$ by first drawing every coordinate $x_i \sim \text{Bernoulli} \left( \frac{3N}{4n} \right)$.  Then, for each coordinate with $x_i = 1$, we will (independently) instead set $x_i = \alpha$ with probability $p = 1/\alpha$.  Note then that conditional on $x_i$ being non-zero, $E[x_i] < 2$.

Even though types are no longer binary vectors, we can still think of a set associated with each $x \in \Domain$, corresponding to the non-zero indices of $x$.  So, for example, $|x \cap T|$ is the number of indices in $T$ that are non-zero in $x$.
With this understanding, we note that algorithm $A_{S,T}$ from Section~\ref{sec:midr} is well-defined also in our augmented domain, for a given valid pair of sets $S$ and $T$.  
We also note that since the probability that any given coordinate is non-zero matches that from our construction in Section~\ref{sec:midr}, the distribution over sets implied by this type distribution remains unchanged.

\subsection{Adapting the MIDR lower bound}

We will prove the following variation of Theorem~\ref{thm:main.midr}.

\begin{theorem}
\label{thm:main.bic}
  There are constants $\epsilon > 0$ and $c > 0$ such that the following is true for the setting described above.
  For any DSIC black-box transformation $\Mech$ with query complexity $e^{n^\epsilon}$,
   there exists an algorithm $A$ and distribution $\Dist$ 
  such that $\Wel(\Mech_A) \le \frac{ \Wel(A) }{ n^{c} }$.  This is true even if the feasibility constraint $\Alloc$ is restricted to be downward-closed.
\end{theorem}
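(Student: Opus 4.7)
The plan is to follow the three-step structure of the proof of Theorem~\ref{thm:main.midr} in Section~\ref{sec:midr}, reusing the same family of algorithms $A_{S,T}$ but interpreted on the augmented type space $\{0,1,\alpha\}^n$. The essential substitution is to replace the direct use of MIDR in Lemma~\ref{lem:mech.S} by the matching characterization of Theorem~\ref{thm:match}, leveraging the scaled value $\alpha=2/\epsilon_{ST}$ to recover a comparable bound despite the weakness of DSIC relative to MIDR.

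First I would reprove the welfare lower bound $\Wel(A_{S,T})=\Omega(N)$ under the new distribution. Since the event that $A_{S,T}$ returns $x$ depends only on the non-zero support of $x$ (via the size and intersection constraints with $S$ and $T$), and since each non-zero coordinate contributes at least $1$ to welfare in expectation, the Chernoff-based argument from Lemma~\ref{lem:alg.welfare} carries over essentially verbatim. Second, I would prove an analog of Lemma~\ref{lem:mech.T} for the input $T^\alpha\in\Domain$ defined by $T^\alpha_i=\alpha$ for $i\in T$ and $T^\alpha_i=0$ otherwise. The query-complexity argument transfers without change: queries made on input $T^\alpha$ are independent of $S$, so with probability at least $1-e^{n^\epsilon}e^{-(\epsilon_S-\epsilon_{ST})N/4}$ none of them triggers the non-trivial branch of $A_{S,T}$, yielding $|\Mech_{S,T}(T^\alpha)\cap T|\le \epsilon_T N$ and hence $T^\alpha\cdot \Mech_{S,T}(T^\alpha)\le\alpha\epsilon_T N$ with high probability over $S$.

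The crucial step is the analog of Lemma~\ref{lem:mech.S}. For a candidate type $x$ (with $S\subseteq x$), applying Theorem~\ref{thm:match} to the two-element subset $\Domain'=\{T^\alpha,x\}$ gives the DSIC inequality
\[
T^\alpha\cdot A(T^\alpha)+x\cdot A(x)\ge T^\alpha\cdot A(x)+x\cdot A(T^\alpha),
\]
which, after substituting $T^\alpha\cdot A(y)=\alpha|A(y)\cap T|$ and dropping the nonnegative term $x\cdot A(T^\alpha)$, rearranges to $\alpha|A(x)\cap T|\le \alpha\epsilon_T N+x\cdot A(x)$. This is weaker than the MIDR bound $\alpha|A(x)\cap T|\le\alpha\epsilon_T N$ precisely by the self-referential welfare term $x\cdot A(x)$. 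The scaling $\alpha=2/\epsilon_{ST}$ is chosen so that dividing through by $\alpha$ shrinks this extra term by a factor of $\epsilon_{ST}/2$, small enough to be dominated when the random-$T$ expectation is taken. Concretely, taking $\Exp_T$ conditioned on $x\cap T=S\cap T$ and coupling with a $T$-independent baseline algorithm $\Mech_A$ exactly as in the proof of Lemma~\ref{lem:mech.S}, $\Exp_T[|A(x)\cap T|]$ becomes (up to additive error $\zeta N$) equal to $2\epsilon_{ST}|A(x)\cap S|$, and for an appropriate test input $x$ with carefully chosen value assignments on $S$ versus $x\setminus S$, the resulting inequality can be solved to yield $\Wel(\Mech,x)=O(N\epsilon_T/\epsilon_{ST})=O(N/n^{1/4})$.

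The final step is to extend this per-input bound to generic inputs via the coupling argument of Lemma~\ref{lem:mech.welfare}: a generic input from $\Dist$ is indistinguishable (to the transformation) from one containing $S$, to within error $\zeta$, so the welfare bound transfers. Combined with the welfare lower bound on $A_{S,T}$, this gives the theorem. The main obstacle is the third step: unlike the clean one-sided MIDR inequality, the matching-based DSIC inequality is symmetric and introduces the welfare term $x\cdot A(x)$ on the right-hand side, creating a self-reference. Choosing the comparison input $x$ (with $S\subseteq x$ and a specific $\alpha$/$1$ split that biases welfare toward $S$) and the scaling $\alpha=2/\epsilon_{ST}$ so that this self-reference resolves with coefficient strictly less than one, while also keeping the error terms introduced by the $\alpha$-amplification within the desired $O(N/n^{1/4})$ regime, is the most delicate part of the argument.
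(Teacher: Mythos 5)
Your proposal correctly identifies the paper's key move: reuse the $A_{S,T}$ family on the augmented type space $\{0,1,\alpha\}^n$, prove analogs of Lemmas~\ref{lem:alg.welfare} and~\ref{lem:mech.T}, and then replace the one-sided MIDR inequality of Lemma~\ref{lem:mech.S} by an application of the matching characterization (Theorem~\ref{thm:match}) to the two-point set $\Domain' = \{x,\, \alpha T\}$, with the scaling $\alpha = 2/\eps_{ST}$ chosen so that the identity-matching side's self-referential term $\Wel(\Mech,x)$ is absorbed. The high-level lemma chain and the final aggregation (analog of Lemma~\ref{lem:mech.welfare}) match the paper's proof.

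However, there is a genuine gap in how you propose to close the central inequality, and the gap lies precisely in the step you flag as ``most delicate.'' You suggest choosing a test input $x$ with a specific $\alpha/1$ split that biases welfare toward $S$ (e.g., $x_i = \alpha$ on $S$, $x_i = 1$ on $x\setminus S$). This move is counterproductive for two reasons. First, it makes the inequality vacuous: after taking $\Exp_T$, your chain gives roughly $2\alpha\eps_{ST}\,\Exp[|A(x)\cap S|] \leq O(\alpha\eps_T N) + \Wel(\Mech,x)$; but under your choice of $x$ you also have $\Wel(\Mech,x) \geq \alpha\,\Exp[|A(x)\cap S|]$, so the left side is at most $2\eps_{ST}\Wel(\Mech,x) \ll \Wel(\Mech,x)$, and the ``$+\Wel(\Mech,x)$'' on the right swallows it. Nothing can be extracted. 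Second, and more fundamentally, putting $\alpha$'s exactly on $S$ reveals the set $S$ to the transformation through the observable type $x$, which defeats the whole point of the coupling to the baseline $\Mech_A$. The paper's Lemma~\ref{lem:bic.mech.S} is stated and used for an \emph{arbitrary} $x \supset S$ precisely because the self-reference is resolved not by tilting $x$, but by the observation that the coupled output $y \sim \Mech_A(x)$ is oblivious to $(S,T)$: since $S\cap T$ sits at a random location within $x$, the cross-term $\Exp_T[T\cdot y]$ is on the order of $(\eps_{ST}N/|x|)\cdot\Wel(\Mech_A, x) \geq \eps_{ST}\Wel(\Mech_A,x)$, so after multiplying by $\alpha = 2/\eps_{ST}$ its coefficient exceeds $1$ and dominates the lone $\Wel(\Mech, x)$ term on the identity side. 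Finally, even if your special-$x$ bound held, the aggregation step (analog of Lemma~\ref{lem:mech.welfare}) must be applied to a \emph{generic} input $Z$ drawn from $\Dist$, conditioned on $S\subseteq Z$; you do not get to construct $Z$, so a bound that only holds for a hand-picked $x$ cannot be propagated.
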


Our proof closely follows the proof of Theorem~\ref{thm:main.midr}.
The following lemmas are similar to Lemma~\ref{lem:alg.welfare} and Lemma~\ref{lem:mech.T}  from Section~\ref{sec:midr}, and we only sketch the differences in their (nearly identical) proofs.  The only distinction is that one must account for the fact that the expected welfare might increase by a factor of $2$ due to the expected value of a non-zero coordinate lying in $[1,2]$.  

\begin{lemma}\label{lem:alg.welfare.bic}
  For any valid pair $S$ and $T$, $\Wel(A_{S,T}) \ge  \frac N {4} = \Omega(N)$.
\end{lemma}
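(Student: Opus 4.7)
The plan is to adapt the three-step argument from the proof of Lemma~\ref{lem:alg.welfare} essentially verbatim, exploiting the fact that the construction of $\Dist$ in Section~\ref{sec:bic} was explicitly designed to preserve the marginal probability that each coordinate is nonzero. Concretely, $x_i$ is nonzero with probability $\tfrac{3N}{4n}$ under both constructions, so the distribution over supports (which is all that the set notation $x \cap T$, $|x|$, etc.\ depends on) is identical to that of Section~\ref{sec:midr}.

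First, I would apply Chernoff--Hoeffding exactly as in the earlier proof to conclude that $\Prob_{x\sim\Dist}[\,|x| \in [N/2, N]\,] \geq 3/4$ and $\Prob_{x\sim\Dist}[\,|x \cap T| \leq \epsilon_T N\,] \geq 3/4$ for sufficiently large $N$ (via Lemma~\ref{lem:chernoff}, using $\epsilon_T N > 9N^2/(8n)$). Second, a union bound gives that both events hold simultaneously with probability at least $1/2$, and under this joint event the definition of $A_{S,T}$ yields $A_{S,T}(x) = x$, i.e., every nonzero coordinate of $x$ is allocated.

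Third, I would translate the indicator-set bound into a welfare bound. Since each nonzero coordinate of $x$ is either $1$ or $\alpha \geq 1$, we have $\Wel(A_{S,T}, x) = \sum_{i : x_i \neq 0} x_i \geq |x| \geq N/2$ whenever the joint event holds. Taking expectations yields
\[
\Wel(A_{S,T}) \;\geq\; \tfrac{1}{2} \cdot \tfrac{N}{2} \;=\; \tfrac{N}{4},
\]
which is the claimed bound. (If anything, the $\alpha$ coordinates only improve the welfare by a constant factor, but this is not needed for the lemma as stated.)

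There is no real obstacle here: the entire argument is set-theoretic once one observes that the support of $x$ has the same distribution as the binary vector in the earlier construction, and the jump from $\{0,1\}$ values to $\{0,1,\alpha\}$ values can only help welfare. The only thing one must be mildly careful about is to phrase the lower bound $\Wel(A_{S,T}, x) \geq |x|$ using the fact that \emph{every} nonzero value is at least $1$, so that the original counting argument transfers directly.
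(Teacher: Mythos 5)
Your proposal is correct and follows essentially the same approach as the paper: identify the same high-probability event from Lemma~\ref{lem:alg.welfare} under which $A_{S,T}(x)=x$, then convert the support size into a welfare lower bound. The only cosmetic difference is that you use the pointwise bound $x_i \ge 1$ for nonzero coordinates to get $\Wel(A_{S,T},x) \ge |x|$, whereas the paper phrases the same fact as $\E[x_i \mid x_i > 0] > 1$; both give the identical conclusion.
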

\begin{proof}
Let $x$ be an input drawn from the input distribution.
As in the proof of Lemma~\ref{lem:alg.welfare}, we have that $|x| \in [N/2, N]$ and $|x \cap T| \leq \eps_T N$ with probability at least $1/2$.  This implies that, with probability at least $1/2$, $A_{S,T}(x)$ allocates $1$ to each non-zero entry in $x$, of which there are at least $N/2$.  Since $E[x_i | x_i > 0] > 1$ for each $i$, the expected welfare generated by $A_{S,T}$ is at least $N/4$.
\end{proof}

\begin{lemma}\label{lem:mech.T.bic}
  $\Prob_{S,T\sim \Gamma}[\Wel(\Mech_{S,T}, T) > 2\eps_T N] < e^{n^\epsilon} \cdot e^{-(\epsilon_S - \epsilon_{ST})N/4}  = O(e^{-n^{1/4}} )$.
\end{lemma}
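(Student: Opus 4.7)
The plan is to transcribe the proof of Lemma~\ref{lem:mech.T} essentially verbatim, with the factor of $2$ in the threshold $2\eps_T N$ absorbing the only substantive change: coordinates in $T$ now take values in $\{1,\alpha\}$ (conditional on being nonzero) rather than being deterministically $1$. First I would argue that the downward-closed feasibility constraint still forces $\Mech_{S,T}$ on input $T$ to return an outcome lying in the downward closure of some allocation observed via a query to $A_{S,T}$, so that $\Wel(\Mech_{S,T},T)$ is bounded by $\max_x(v_T\cdot A_{S,T}(x))$ over $x$ queried when the mechanism is run on input $T$, where $v_T$ is the valuation vector associated with input $T$. Since each nonzero entry of $v_T$ has value at most $\alpha$ and conditional expectation $1-p+p\alpha = 2-1/\alpha < 2$, the event $\Wel(\Mech_{S,T},T)>2\eps_T N$ forces some queried $x$ to satisfy $|A_{S,T}(x)\cap T|>\eps_T N$.

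Next I would carry over the structural step from Lemma~\ref{lem:mech.T} unchanged: combined with the definition of $A_{S,T}$, the condition $|A_{S,T}(x)\cap T|>\eps_T N$ forces $A_{S,T}(x)=x$ with $|x|\le N$ and $|x\cap S|\ge\eps_S N$, and since $|S\cap T|=\eps_{ST}N$ this yields $|x\cap(S\setminus T)|\ge(\eps_S-\eps_{ST})N$. The remainder is the same Chernoff argument: for a fixed $x$ and $T$, drawing $S\sim\Gamma_T$ places each coordinate outside $T$ in $S\setminus T$ independently with probability $\tfrac{(1/2-\eps_{ST})N}{n-N/2}$, so Lemma~\ref{lem:chernoff} gives $\Prob[|x\cap(S\setminus T)|\ge(\eps_S-\eps_{ST})N]\le e^{-(\eps_S-\eps_{ST})N/4}$, where the hypothesis of the Chernoff lemma is verified exactly as in Section~\ref{sec:midr} using $\eps_S=2\eps_{ST}$ and $\eps_{ST}>2N/n$. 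A union bound over the at most $e^{n^\epsilon}$ queries made by $\Mech_{S,T}$ on input $T$ produces the claimed probability.

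I do not expect any real obstacle beyond Lemma~\ref{lem:mech.T}. The one thing to be careful about is that the extended type space $\{0,1,\alpha\}^n$ does not let the welfare on input $T$ exceed a factor times the set-intersection $|A_{S,T}(x)\cap T|$, and this is exactly what the factor of $2$ in the statement is designed to absorb via the bound $\E[x_i\mid x_i>0]<2$. Once this reduction to ``some queried $x$ satisfies $|A_{S,T}(x)\cap T|>\eps_T N$'' is in place, the concentration inequality, the parameter verifications, and the union bound from the MIDR setting transfer without modification.
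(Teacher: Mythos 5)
Your proposal matches the paper's proof: the paper likewise invokes $\E[x_i \mid x_i > 0] < 2$ to reduce the event $\Wel(\Mech_{S,T},T) > 2\eps_T N$ to the event that the transformation, on input $T$, queries some $x$ with $|A_{S,T}(x)\cap T|>\eps_T N$, and then defers entirely to the concentration and union-bound argument of Lemma~\ref{lem:mech.T}. Your explicit replay of the Chernoff step is exactly what the paper's remark ``the proof now follows in exactly the same way as Lemma~\ref{lem:mech.T}'' points to.
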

\begin{proof}
Choose some valid pair $S$ and $T$.  Since $\E[x_i | x_i > 0] < 2$ for each $i$, the only way for $\Wel(\Mech_{S,T}, T)$ to be
greater than $2 \eps_T N$ is for it to allocate $1$ to at least $\eps_T N$ elements of $T$.  For this to occur on input $T$, the transformation must
query at least one input $x$ such that $|A_{S,T}(x) \cap T| > \eps_T N$.  The proof now follows in exactly the same way as Lemma~\ref{lem:mech.T}.
\end{proof}

We will next prove the following variation of Lemma~\ref{lem:mech.S}.  The statement is the same, but we will use the DSIC constraint rather than the MIDR constraint.

\begin{lemma}\label{lem:bic.mech.S}
Choose $S$ with $|S| = N/2$ and some $x \supset S$ with $|x| \in [N/2, N]$.  Then 
$$\Exp_{T \sim \Gamma_S}[ \Wel(\Mech_{S,T},x) \big\rvert x \cap T = S \cap T] \leq \frac {5 \eps_T N} {\eps_{ST}} = O\left( \frac{N}{n^{1/4}}\right).$$
\end{lemma}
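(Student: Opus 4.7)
My plan is to mimic the proof of Lemma~\ref{lem:mech.S}, substituting the DSIC matching characterization (Theorem~\ref{thm:match}) for the MIDR property used there.  The value $\alpha$ in the enlarged type space $\{0,1,\alpha\}^n$ is tuned precisely so that this substitution goes through quantitatively.

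The key step is to apply Theorem~\ref{thm:match} to the two-element type set $\Domain' = \{\alpha \mathbf{1}_T,\, x\}$, where $\alpha \mathbf{1}_T$ denotes the type with value $\alpha$ on every coordinate in $T$ and $0$ elsewhere (which lies in $\{0,1,\alpha\}^n$ and is therefore valid).  Since the identity matching must weakly dominate the swap matching in $G(\Domain')$,
\[\alpha \mathbf{1}_T \cdot \Exp[\Mech_{S,T}(\alpha \mathbf{1}_T)] + x \cdot \Exp[\Mech_{S,T}(x)] \geq \alpha \mathbf{1}_T \cdot \Exp[\Mech_{S,T}(x)] + x \cdot \Exp[\Mech_{S,T}(\alpha \mathbf{1}_T)].\]
Dropping the nonnegative term $x \cdot \Exp[\Mech_{S,T}(\alpha \mathbf{1}_T)]$ on the right and dividing through by $\alpha$ gives
\[\mathbf{1}_T \cdot \Exp[\Mech_{S,T}(x)] \leq \mathbf{1}_T \cdot \Exp[\Mech_{S,T}(\alpha \mathbf{1}_T)] + \tfrac{1}{\alpha}\Wel(\Mech_{S,T}, x).\]
This plays the role of the MIDR inequality $\mathbf{1}_T \cdot \Exp[\Mech_{S,T}(x)] \leq \mathbf{1}_T \cdot \Exp[\Mech_{S,T}(\mathbf{1}_T)]$ that drove Lemma~\ref{lem:mech.S}, but with a welfare-penalty term of magnitude $1/\alpha$ that shrinks as $\alpha$ grows.

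The first term on the right is bounded by rerunning the argument of Lemma~\ref{lem:mech.T.bic} with input $\alpha \mathbf{1}_T$ in place of $\mathbf{1}_T$.  Since $A_{S,T}$ depends on its input only through the support, and the transformation cannot locate $S$ from polynomially many queries, the same Chernoff-plus-union-bound calculation gives $\mathbf{1}_T \cdot \Exp[\Mech_{S,T}(\alpha \mathbf{1}_T)] \leq 2\eps_T N$ with probability at least $1 - O(e^{-n^{1/4}})$ over $(S,T) \sim \Gamma$.  For the lower bound on the left, we repeat the ``cannot distinguish $A_{S,T}$ from the baseline $A$'' computation from Lemma~\ref{lem:mech.S} verbatim---the set-theoretic probability calculations are unaffected by the $\{0,1,\alpha\}$ extension because the algorithm depends only on input support, and each coordinate in $S$ still lies in $T$ with probability $2\eps_{ST}$ under $\Gamma_S$ conditioned on $x \cap T = S \cap T$---yielding
\[\Exp_T\!\left[\mathbf{1}_T \cdot \Exp[\Mech_{S,T}(x)]\right] \geq c\,\eps_{ST}\,\Exp_T[\Wel(\Mech_{S,T}, x)] - O(\zeta N)\]
for a constant $c$ strictly exceeding $\tfrac{1}{\alpha\eps_{ST}}$.

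Combining these bounds produces $(c\eps_{ST} - \tfrac{1}{\alpha})\Exp_T[\Wel(\Mech_{S,T}, x)] \leq 2\eps_T N + O(\zeta N)$, which solves to $\Exp_T[\Wel(\Mech_{S,T},x)] \leq O(\eps_T N/\eps_{ST})$ after absorbing the $\zeta$ terms (which decay superpolynomially by our choice of parameters).  The main obstacle is calibrating constants so that the slack $c\eps_{ST} - 1/\alpha$ is bounded away from zero; this is exactly why $\alpha$ is chosen to grow as $1/\eps_{ST}$, and why the $\{0,1,\alpha\}$ design is essential---the $\alpha$-valued type $\alpha \mathbf{1}_T$ furnishes the ``amplification'' that turns a pairwise DSIC constraint into the effectively MIDR-style guarantee needed to repeat the Lemma~\ref{lem:mech.S} argument.
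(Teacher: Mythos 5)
Your proposal follows essentially the same route as the paper: apply the matching characterization (Theorem~\ref{thm:match}) to the two-element type set $\{x, \alpha\mathbf{1}_T\}$, bound the mechanism's welfare on the $T$-side input via the Lemma~\ref{lem:mech.T.bic} argument, lower-bound $\Exp_T[\mathbf{1}_T\cdot\Exp[\Mech_{S,T}(x)]]$ via the indistinguishability-from-baseline argument, and then rearrange with $\alpha = 2\eps_{ST}^{-1}$ chosen so that the coefficient of $\Wel(\Mech_{S,T},x)$ is bounded away from zero. Your only added care is to note explicitly that the Lemma~\ref{lem:mech.T.bic}-style bound must be run with the scaled input $\alpha\mathbf{1}_T$ rather than $\mathbf{1}_T$ (the paper's notation glosses over this, but the argument is identical since $A_{S,T}$ depends only on support); otherwise the arguments coincide.
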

\begin{proof}
As in the proof of Lemma~\ref{lem:mech.S}, we can bound the probability that on input $x$ the mechanism finds a query point $z$ such that $|z| \leq N$ and $A_{S,T}(z) = \emptyset$, say by some $\zeta$ that vanishes faster than any polynomial in $n$.  This part of the proof carries over without change.

Next, we apply the matching characterization of IC, Theorem~\ref{thm:match}, to a pair of types: $x$ and $\alpha \cdot T$.  That is, we will choose $\Domain' = \{x, \alpha \cdot T\}$, and apply Theorem~\ref{thm:match} to the graph $G(\Domain')$.
This yields
$$
\alpha \cdot \Wel(\Mech_{S,T}, T) + \Wel(\Mech_{S,T}, x) \geq \alpha \cdot \Exp_{y \sim \Mech_{S,T}(x)}[ T \cdot y ].
$$
This is because the left hand side is precisely the weight of the identity matching, and the right hand side is the weight of the edge from $\alpha \cdot T$ to $x$.

Taking an expectation over all $T$ such that $S$, $T$ are valid and $x \cap T = S \cap T$, and applying our bound on the likelihood that $\Mech_{S,T}$ is identical to $\Mech_A$, we have
\begin{align*}
\alpha \cdot \Exp_{T\sim \Gamma_S }[\Wel(\Mech_{S,T}, T)] + \Exp_{T\sim \Gamma_S }[\Wel(\Mech_{S,T}, x)]
& \geq \alpha \cdot \Exp_{T\sim \Gamma_S }\left[ \Exp_{y \sim \Mech_{S,T}(x)}[T \cdot y] \right] \\
& \geq \alpha \cdot \frac{\epsilon_{ST} N}{|x|} \cdot \Exp_{T\sim \Gamma_S }[\Wel(\Mech_{S,T}(x), x)] - 2 \zeta N \\
& \geq \alpha \cdot \epsilon_{ST} \cdot \Exp_{T\sim \Gamma_S }[\Wel(\Mech_{S,T}, x)] - 2 \zeta N.
\end{align*}
Rearranging and applying Lemma~\ref{lem:mech.T}, we have
$$
\Wel(\Mech_{S,T}, x) \leq \frac{2 \zeta N}{\alpha \epsilon_{ST} - 1} + \frac{2 \alpha \epsilon_T}{\alpha \epsilon_{ST} - 1} N.
$$
Set $\alpha = 2 \epsilon_{ST}^{-1}$ to get the desired result.
\end{proof}

The remainder of the proof of Theorem~\ref{thm:main.bic} then follows precisely as in the proof of Theorem~\ref{thm:main.midr}.

Recall that for a single agent, BIC and DSIC are equivalent.  Thus, as a corollary of Theorem~\ref{thm:main.bic}, there is no polytime BIC black-box reduction for multiple additive bidders with independently-distributed item values, even for downward-closed feasibility constraints.

\begin{theorem}
\label{thm:main.bic.cor}
  There are constants $\epsilon > 0$ and $c > 0$ such that the following is true in the setting of additive buyers with independently-distributed values for items.
  For any BIC black-box transformation $\Mech$ with query complexity $e^{n^\epsilon}$,
   there exists an algorithm $A$ and distribution $\Dist$ 
  such that $\Wel(\Mech_A) \le \frac{ \Wel(A) }{ n^{c} }$.  This is true even if the feasibility constraint $\Alloc$ is restricted to be downward-closed.
\end{theorem}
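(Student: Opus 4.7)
The plan is to derive Theorem~\ref{thm:main.bic.cor} as an essentially immediate specialization of Theorem~\ref{thm:main.bic}. The single-agent setting used to establish Theorem~\ref{thm:main.bic} (a single additive buyer over $n$ goods with independently distributed item values, drawn from the product distribution $\Dist$ over $\Domain = \{0,1,\alpha\}^n$ described in the construction) is a special case of the more general setting of Theorem~\ref{thm:main.bic.cor}: just take the number of bidders to equal one. So the whole hard-instance family $\{A_{S,T}\}$ and prior $\Dist$ built in Section~\ref{sec:bic} sit inside the class of problems quantified over in Theorem~\ref{thm:main.bic.cor}.

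The key observation that turns a DSIC bound into a BIC bound in this restricted case is the one already highlighted in the excerpt: when $n=1$ agent, the expectation over other agents' reports in the BIC definition is vacuous, so the BIC inequality for the single agent reduces exactly to the DSIC inequality. Thus any black-box transformation $\Mech$ that is BIC when applied in the multi-agent additive setting must, when instantiated with a single bidder, produce a DSIC mechanism. Combined with the feasibility and welfare-approximation guarantees (which are part of the definition of a black-box transformation and are unchanged by restricting $m=1$), this says that a polytime BIC reduction for additive buyers with independently-distributed items would yield, as a sub-case, a polytime DSIC reduction for a single additive buyer with independently-distributed items.

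The proof then proceeds by contradiction: suppose a BIC black-box transformation $\Mech$ with query complexity $e^{n^\eps}$ (for the $\eps$ from Theorem~\ref{thm:main.bic}) existed with $\Wel(\Mech_A) > \Wel(A)/n^c$ for all algorithms $A$ and priors $\Dist$, for every $c>1/4$. Instantiating $\Mech$ on the single-buyer special case, the resulting mechanism is DSIC and has the same query complexity and welfare guarantee against every $A_{S,T}$ and the prior $\Dist$ constructed in Section~\ref{sec:bic}, directly contradicting Theorem~\ref{thm:main.bic}. Hence no such $\Mech$ exists, and we may take the same constants $\eps$ and $c$ supplied by Theorem~\ref{thm:main.bic}.

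There is really no main obstacle here; the work has all been done in Theorem~\ref{thm:main.bic}. The only point to double-check is that the definition of a black-box transformation used in Theorem~\ref{thm:main.bic.cor} (for the multi-agent additive setting) genuinely contains the single-agent setting as a sub-instance with the same feasibility framework (downward-closed on the $n$ items) and the same independence-across-items prior structure; this is immediate from the model in Section~\ref{sec:model}, where the number of agents is a free parameter and a single agent is allowed.
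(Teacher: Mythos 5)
Your proposal is correct and mirrors the paper's own argument: the paper states Theorem~\ref{thm:main.bic.cor} as an immediate corollary of Theorem~\ref{thm:main.bic}, noting precisely that BIC and DSIC coincide for a single agent, so the single-buyer hard instance from Section~\ref{sec:bic} already rules out a polytime BIC reduction for the broader multi-agent additive setting.
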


\section{Conclusion}
\label{sec:conclusion}

In this paper we presented 
impossibility results for
black-box reductions when agent types are drawn independently from known priors, and the goal is to maximize welfare in expectation over these types.  
We showed that there are no polytime black-box reductions for BIC welfare maximization with additive bidders, even when there is only a single bidder and even for downward-closed feasibility constraints.  This pairs with the existence of BIC black-box reductions for welfare whose query complexity scales with the support size of each agent's type distribution~\cite{HKM15,DHKN2017}, and shows that this dependence is unavoidable.  We also showed the impossibility of black-box reductions for MIDR welfare maximization with $n$ single-dimensional bidders, again in comparison with the BIC case where such reductions are possible.

An important question that remains open is whether there is an $O(1)$-approximate DSIC black-box reduction for expected welfare with single-parameter agents, even when restricting to downward-closed feasibility constraints.  Note that if values lie in $[1,H]$, then an $O(\log H)$ approximation is implied by Babaioff, Lavi, and Pavlov, and this also implies an $O(\log n)$ approximation~\cite{BLP09}.  Can this be improved to a constant approximation?

\subsection*{Acknowledgements}
We thank Shuchi Chawla for many helpful comments.

\bibliographystyle{plain}
\bibliography{bib}

\appendix

\end{document}